\documentclass[envcountsect,envcountsame]{llncs}
\usepackage{amsfonts}
\usepackage{amssymb,epic,eepic}
\usepackage{amsmath}
\usepackage{enumerate}
\sloppy
\frenchspacing

\usepackage[usenames,dvipsnames]{pstricks}
\usepackage{epsfig}
\usepackage{pst-grad} 
\usepackage{pst-plot} %

\newcommand{\ty}{{\mathcal{T}}}
\newcommand{\pp}{{\mathcal{P}}}

\newcommand{\nn}{{\mathbb N}}



\begin{document}

\title{Capacity Results for Arbitrarily Varying Wiretap Channels}
\author{Igor Bjelakovi\'c, Holger Boche and Jochen Sommerfeld}
\institute{Lehrstuhl f\"ur theoretische Informationstechnik, Technische Universit\"at M\"unchen, 80290
  M\"unchen, Germany\\ \email{\{igor.bjelakovic, boche, jochen.sommerfeld\}@tum.de}}
\maketitle
\begin{center}
\emph{Dedicated to the memory of Rudolf Ahlswede}
\end{center}

\begin{abstract}
In this work the arbitrarily varying wiretap channel AVWC is studied. We derive a lower bound on the
random code secrecy capacity for the average error criterion and the strong secrecy criterion in the case of a
best channel to the eavesdropper by using Ahlswede's robustification technique for ordinary AVCs. We show
that in the case of a non-symmetrisable channel to the legitimate receiver the deterministic code secrecy
capacity equals the random code secrecy capacity, a result similar to Ahlswede's dichotomy result for
ordinary AVCs. Using this we can derive that the lower bound is also valid for the deterministic code
capacity of the AVWC. The proof of the dichotomy result is based on the elimination technique introduced
by Ahlswede for ordinary AVCs.  We further prove upper bounds on the deterministic code secrecy capacity
in the general case, which results in a multi-letter expression for the secrecy capacity in the case of a
best channel to the eavesdropper. Using techniques of Ahlswede, developed to guarantee the validity of a
reliability criterion, the main contribution of this work is to integrate the strong secrecy criterion
into these techniques.    
\end{abstract}

\section{Introduction}
Models of communication systems taking into account both the requirement of security against a potential eavesdropper
and reliable information transmission to legitimate receivers which suffer from channel uncertainty, have
received much interest in current research. One of the simplest communication models with channel uncertainty are
compound channels, where the channel realisation remains fixed during the whole transmission of a
codeword. Compound wiretap channels were the topic of previous work of the authors \cite{bjela3},
\cite{bjela2} and for example of \cite{liang}, \cite{bloch}. In the model of an arbitrarily varying wiretap
channel AVWC the channel state to both the legitimate receiver and the eavesdropper varies from symbol to
symbol in an unknown and arbitrary manner. Thus apart from eavesdropping the model takes into account an active
adversarial jamming situation in which the jammer chooses the states at her/his will. Then reliable transmission to the
legitimate receiver must be guaranteed in the presence of the jammer. 

In this paper we consider families of pairs of channels $\mathfrak{W}=\{(W_{s^n},V_{s^n}):s^n \in S^n\}$
with common input alphabets and possibly different output alphabets, where $s^n \in S^n$ denotes the state
sequence during the transmission of a codeword. The legitimate users are connected via $W_{s^n}$
and the eavesdropper observes the output of $V_{s^n}$. In our communication scenario the legitimate users
have no channel state information. We derive capacity results for the AVWC $\mathfrak{W}$ under the
average error probability criterion and a strong secrecy criterion. The investigation of the
corresponding problem concerning the maximum error criterion is left as a subject of future
investigations. However, we should emphasize that there is no full capacity result for an ordinary
(i.e. without secrecy constraints) AVC for the maximum error criterion. Together with Wolfowitz in
\cite{ahlswwolf2} Ahlswede determined the capacity for AVCs with binary output alphabets under this criterion. In
\cite{ahlsw4} he showed that the general solution is connected to Shannon's zero error capacity problem
\cite{shannon2}. 

Two fundamental techniques, called \emph{elimination} and \emph{robustification technique} discovered by
Ahlswede will play a crucial role in this paper. In \cite{ahlsw3} he developed the
\emph{elimination technique} to derive the deterministic code capacity for AVCs under the average
error probability criterion, which is either zero or equals its random code capacity, a result, which is
called Ahlswede's dichotomy for single user AVCs. With the so-called
\emph{robustification technique} \cite{ahlsw2} in turn he could link random codes for the AVC to deterministic
codes for compound channels.  
Further in the papers \cite{ahlswcsis1}, \cite{ahlswcsis2} on \emph{common randomness} in information
theory Ahlswede together with Csiszar studied, inter alia, problems of information theoretic security by
considering a model which enables secret sharing of a random key, in particular in the presence of a
wiretapper.  Because the arbitrarily varying wiretap channel AVWC combines both the wiretap channel and
the AVC it is not surprising that we can use the aforementioned techniques to derive capacity results for
the AVWC. The actual challenge of our work was to integrate the strong secrecy criterion in both the
\emph{elimination} and the \emph{robustification technique}, approaches, both were developed to guarantee
a reliability criterion. As it was shown in \cite{bjela2}, compared with weaker secrecy criteria, the
strong secrecy criterion ensures that the average error probability of every decoding strategy  of the
eavesdropper in the limit tends to one. 

In Section \ref{Rancode} we give a lower bound on the random code secrecy capacity in the special
case of a "best" channel to the eavesdropper. The proof is based on the \textit{robustification
  technique} by Ahlswede \cite{ahlsw2} combined with results for compound wiretap
channels given by the authors in \cite{bjela2}. 

In Section \ref{determ_code} we
use the \textit{elimination technique} \cite{ahlsw3}, which is composed of the \textit{random code
  reduction}  and the \textit{elimination of randomness} \cite{csis2}, to show that, provided that the
channel to the legitimate receiver is non-symmetrisable, the deterministic code secrecy capacity equals
the random code secrecy capacity and to give a condition when it is greater than zero. Thus we establish
a result for the AVWC that is similar to that of Ahlswede's dichotomy result for ordinary AVCs. As a consequence the
above-mentioned lower bound on the random code secrecy capacity can be achieved by a deterministic code
under the same assumptions.

In Section \ref{upper_bound} we give a single-letter upper bound on the deterministic code secrecy
capacity, which corresponds to the upper bound of the secrecy capacity of a compound wiretap
channel. Moreover, by establishing an multi-letter upper bound on the secrecy capacity we can conclude to
a multi-letter expression of the secrecy capacity of the AVWC in the special case of a best channel to the
eavesdropper.
 
The lower bound on the secrecy capacity as well as other results were given earlier
in \cite{molav} for a weaker secrecy criterion, but the proof techniques for the stronger
secrecy criterion differ significantly, especially in the achievability part for the random
codes.
 
\section{Arbitrarily Varying Wiretap Channels}\label{avwc}
\subsection{Definitions}

Let $A,B,C$ be finite sets and consider a non-necessarily finite family of channels 
$W_s:A\to\mathcal{P}(B)$\footnote{$\mathcal{P}(B)$ denotes the set of probability distributions on $B$.},
where $s \in S$ denotes the state of the channel. Now, given $s^n=(s_1,s_2, \ldots ,s_n) \in S^n$ we
define the stochastic matrix
\begin{equation}\label{eq:1}
W^n(y^n| x^n, s^n):=\prod^n_{i=1} W(y_i| x_i, s_i) := \prod^n_{i=1} W_{s_i}(y_i| x_i)
\end{equation}
for all $y^n=(y_1, \ldots, y_n) \in B^n$ and $x^n=(x_1, \ldots ,x_n) \in A^n$. An arbitrarily varying
channel is then defined as the sequence $\{\mathcal{W}^n \}^\infty_{n=1}$ of the family of channels
$\mathcal{W}^n=\{W^n(\cdot| \cdot, s^n) : s^n \in S^n\}$. Now let $\mathcal{W}^n$ represent the
communication link to a legitimate receiver to which the transmitter wants to send a private message,
such that a possible second receiver should be kept as ignorant of that message as possible. We call this
receiver the eavesdropper, which observes the output of a second family of channels
$\mathcal{V}^n=\{V^n(\cdot| \cdot, s^n) : s^n \in S^n\}$ with an analogue definition of $V^n(\cdot|
\cdot, s^n)$ as in \eqref{eq:1} for $V_s:A\to \mathcal{P}(C)$, $s \in S$. Then we denote the set of the two
families of channels with common input by $\mathfrak{W}=\{(W_{s^n},V_{s^n}):s^n \in S^n\}$ and call it
the arbitrarily varying wiretap channel. In addition, we assume that the state sequence $s^n$ is unknown
to the legitimate receiver, whereas the eavesdropper always knows which channel is in use. 

A $(n,J_n)$ code $\mathcal{C}_n$ for the arbitrarily varying wiretap channel $\mathfrak{W}$ consists of a stochastic encoder 
$E:\mathcal{J}_n\to \mathcal{P}(A^n)$ (a stochastic matrix) with a message set
$\mathcal{J}_n:=\{1,\ldots, J_n \}$ and a collection of mutually 
disjoint decoding sets $\{D_j\subset B^n:j\in\mathcal{J}_n  \}$. The average error probability of a
code $\mathcal{C}_n$ is given by
\begin{equation}\label{eq:c}
 e(\mathcal{C}_n):= \max_{s^n \in S^n}  \, \frac{1}{J_n} \sum^{J_n}_{j=1} \sum_{x^n \in A^n} 
E(x^n| j) W_{s^n}^{n}(D_j^c| x^n) \enspace.  
\end{equation} 
A \textit{correlated random} $(n,J_n,\Gamma,\mu)$ \textit{code} $\mathcal{C}^{\textrm{ran}}_n$ for the
arbitrarily varying wiretap channel is given by a family of wiretap codes
$\{\mathcal{C}_n(\gamma)\}_{\gamma \in \Gamma}$ together with a random experiment choosing $\gamma$
according to a distribution $\mu$ on $\Gamma$. The mean average error probability of a random
$(n.J_n,\Gamma,\mu)$ code $\mathcal{C}^{\textrm{ran}}_n$ is defined analogously to the ordinary one but
with respect to the random experiment choosing $\gamma$ by 
\begin{equation*}
\bar{e} ( \mathcal{C}^{\textrm{ran}}_n  )  :=  
 \max_{s^n \in S^n}  \, \frac{1}{J_n} \sum^{J_n}_{j=1} \sum_{\gamma \in \Gamma} \sum_{x^n \in A^n}  
E^{\gamma} (x^n| j) W_{s^n}^{n}((D^{\gamma}_j)^c| x^n) \mu(\gamma) \enspace.   
\end{equation*} 

\if0
If channel state information is available at the transmitter the notion of $(n,J_n)$ code is modified in
that the encoding may depend on the channel index while the decoding sets remain universal,
i.e. independent of the channel index
 $t$. 
The probability of error in \eqref{eq:1} changes to
\begin{equation*}
e_{\textup{CSI}}(\mathcal{C}_n) := \max_{t\in \theta} \, \max_{j \in\mathcal{J}_n} \sum_{x^n\in A^n}
E_t(x^n|j) W_t^{\otimes n}(D_j^c| x^n).
\end{equation*}
\fi


\begin{definition}\label{code}
A non-negative number $R_S$ is an achievable secrecy rate for the AVWC $\mathfrak{W}$, if there is a
sequence $(\mathcal{C}_n)_{n\in\nn}$ of $(n,J_n)$ codes such that
\[ \lim_{n\to\infty} e(\mathcal{C}_n)=0 
\enspace,\]
\[\liminf_{n\to\infty}\frac{1}{n}\log J_n\ge R_S \enspace, \]
and
\begin{equation}\label{eq:2} 
\lim_{n\to\infty} \max_{s^n \in S^n} I(p_J; V^n_{s^n})=0 \enspace, 
\end{equation}
where $J$ is a uniformly distributed random variable taking values in $\mathcal{J}_n$ and $I(p_J;
V^n_{s^n})$ is the mutual information of $J$ and the output variable $Z^n$ of the eavesdropper's channel
$V^n_{s^n}$. The secrecy capacity then is given as the supremum of all achievable secrecy rates $R_S$ and
is denoted by $C_S(\mathfrak{W})$. 
\end{definition}
Analogously we define the secrecy rates and the secrecy capacity for random codes
$C_{S,\textrm{ran}}(\mathfrak{W})$, if we replace $\mathcal{C}_n$ by $\mathcal{C}^{\textrm{ran}}_n$ in the
above definition. 
\begin{definition}\label{code_R}
A non-negative number $R_S$ is an achievable secrecy rate for correlated random codes for the AVWC
$\mathfrak{W}$, if there is a sequence $(\mathcal{C}^{\emph{ran}}_n)_{n\in\nn}$ of $(n,J_n,\Gamma,\mu)$
codes such that
\[ \lim_{n\to\infty} \bar{e}(\mathcal{C}^{\emph{ran}}_n)=0 
\enspace,\]
\[\liminf_{n\to\infty}\frac{1}{n}\log J_n\ge R_S \enspace, \]
and 
\begin{equation}\label{eq:3} 
\lim_{n\to\infty} \max_{s^n \in S^n} \sum_{\gamma \in \Gamma} I(p_J, V^n_{s^n};\mathcal{C}(\gamma))
\mu(\gamma) =0 \enspace, 
\end{equation}
where $I(p_J,V^n_{s^n};\mathcal{C}(\gamma))$ is the mutual information according to the code
$\mathcal{C}_(\gamma), \ \gamma \in \Gamma$ chosen according to the distribution $\mu$. The secrecy capacity then is given as the supremum of all achievable secrecy rates $R_S$ and
is denoted by $C_{S,\emph{ran}}(\mathfrak{W})$. 
\end{definition}

\section{Capacity Results}

\subsection{Preliminaries}

In what follows we use the notation as well as some properties of \emph{typical} and \emph{conditionally
  typical} sequences from \cite{csis2}. For $p\in\mathcal{P}(A)$, $W:A\to\mathcal{P}(B)$, $x^n\in A^n$,
and $\delta>0$  we denote by $\ty_{p,\delta}^n$ the set of typical sequences and by
$\ty_{W,\delta}^n(x^n) $ the set of conditionally typical sequences given $x^n$ in the sense of \cite{csis2}.\\
The basic properties of these sets that are needed in the sequel are summarised in the following three
lemmata.  
\begin{lemma}\label{typical}
Fixing $\delta > 0$, for every $p \in \pp(A)$ and  $W:A \to \pp(B)$ we have
\begin{eqnarray*}
p^{\otimes n}(\ty_{p,\delta}^n) & \geq &1- (n+1)^{|A|} 2^{-nc\delta^2} \\
W^{\otimes n}(\ty_{W,\delta}^n(x^n)|x^n) & \geq &1-(n+1)^{|A||B|} 2^{-nc \delta^2}
\end{eqnarray*}
for all $x^n\in A^n$ with $c=1/(2\ln 2)$. In particular, there is $n_0\in\nn$ such that for each
$\delta>0$ and $p\in \pp(A)$, $W:A\to\pp(B)$
\begin{eqnarray*}
 p^{\otimes n}(\ty_{p,\delta}^n) & \geq &1- 2^{-nc'\delta^2} \\
W^{\otimes n}(\ty_{W,\delta}^n(x^n)|x^n) & \geq &1- 2^{-nc' \delta^2}
\end{eqnarray*}
holds with $c'=\frac{c}{2}$.
\end{lemma}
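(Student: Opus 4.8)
The plan is to derive both estimates by the method of types together with Pinsker's inequality; the constant $c=1/(2\ln2)$ is exactly what Pinsker delivers, and the ``in particular'' part is just the absorption of a polynomial prefactor into the exponent.

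First I would treat the unconditional bound. Recall from \cite{csis2} that $x^n\in\ty_{p,\delta}^n$ iff its empirical type $P_{x^n}$ obeys $|P_{x^n}(a)-p(a)|\le\delta$ for every $a\in A$ (with the usual convention on letters of zero probability). Hence $A^n\setminus\ty_{p,\delta}^n$ is a union of type classes $T_P^n$ over types $P$ with $\|P-p\|_\infty>\delta$; for those with $p^{\otimes n}(T_P^n)>0$ we have $P\ll p$ and the elementary method-of-types estimate $p^{\otimes n}(T_P^n)\le 2^{-nD(P\|p)}$. Pinsker's inequality gives $D(P\|p)\ge\tfrac{1}{2\ln2}\|P-p\|_1^2\ge\tfrac{1}{2\ln2}\|P-p\|_\infty^2>c\delta^2$, so each such term is $\le 2^{-nc\delta^2}$; since there are at most $(n+1)^{|A|}$ types, summing gives the first inequality.

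For the conditional bound I would run the same argument with conditional types, fixing $x^n$ and noting the estimate is uniform in it. A sequence $y^n$ lies outside $\ty_{W,\delta}^n(x^n)$ exactly when its conditional type $V\colon A\to\pp(B)$ given $x^n$ satisfies $P_{x^n}(a)\,|V(b|a)-W(b|a)|>\delta$ for some pair $(a,b)$. The method-of-types bound here is $W^{\otimes n}\bigl(T_V^n(x^n)\,\big|\,x^n\bigr)\le 2^{-nD(V\|W|P_{x^n})}$ with $D(V\|W|P):=\sum_aP(a)D(V(\cdot|a)\|W(\cdot|a))$, and retaining only the offending index $a$ and applying Pinsker to $V(\cdot|a)$ versus $W(\cdot|a)$ yields $D(V\|W|P_{x^n})\ge\tfrac{1}{2\ln2}P_{x^n}(a)\bigl(V(b|a)-W(b|a)\bigr)^2\ge\tfrac{1}{2\ln2}\cdot\tfrac{\delta^2}{P_{x^n}(a)}\ge c\delta^2$, where the factor $P_{x^n}(a)$ cancels against the square. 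There are at most $\prod_{a\in A}(N(a|x^n)+1)^{|B|}\le(n+1)^{|A||B|}$ conditional types, so summing over the atypical ones gives the second inequality, uniformly in $x^n$.

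Finally, with $\delta$ fixed, pick $n_0=n_0(\delta,|A|,|B|)$ with $(n+1)^{|A||B|}\le 2^{nc\delta^2/2}$ for all $n\ge n_0$ (possible by polynomial-versus-exponential growth, and note $n_0$ does not depend on $p$ or $W$); then for $n\ge n_0$ both prefactors are absorbed into half the exponent and one is left with $2^{-nc'\delta^2}$, $c'=c/2$. There is no genuine obstacle here; the only thing demanding care is matching the precise variant of the (conditionally) typical set used in \cite{csis2} --- in particular handling letters with $p(a)=0$ or $W(b|a)=0$, for which the relevant type classes carry zero probability so that the method-of-types bounds hold vacuously --- and keeping the weighted form of Pinsker so that the constant comes out to exactly $1/(2\ln2)$.
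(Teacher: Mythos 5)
Your argument is exactly the ``standard Bernstein--Sanov trick'' the paper invokes in its one-line proof: bound the atypical event by a union over (conditional) type classes, control each class by $2^{-nD}$ via the method of types, lower-bound $D$ by $c\delta^2$ with Pinsker, and count types polynomially; the paper simply outsources these details to the references, so you have filled in the intended proof correctly, including the slightly delicate cancellation of $P_{x^n}(a)$ in the conditional case. One minor point worth noting: your $n_0$ necessarily depends on $\delta$ (the lemma's literal quantifier order ``there is $n_0$ such that for each $\delta>0$'' cannot hold uniformly in $\delta$), which is the reading consistent with how the lemma is used later with $\delta$ fixed before $n\to\infty$.
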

\begin{proof}
 Standard Bernstein-Sanov trick using the properties of types from  \cite{csis2} and Pinsker's inequality. 
The details can be found in \cite{wyrem} and references therein for example. \qed
\end{proof}
Recall that for $p\in\pp(A)$ and $W:A\to\pp(B)$, $pW\in\pp(B)$ denotes the output distribution generated
by $p$ and $W$ and that  $x^n \in \ty^n_{p,\delta}$ and $y^n \in \ty^n_{W,\delta}(x^n)$ imply that $y^n
\in \ty^n_{pW,2|A|\delta}$. 
\begin{lemma}\label{alpha-beta} 
Let $x^n\in \ty^n_{p,\delta}$, then for $V:A\to\pp(C)$
\begin{eqnarray*}
|\ty_{pV,2|A|\delta}^n| &\leq& \alpha^{-1}\\
V^n(z^n|x^n) &\leq& \beta \quad \textrm{for all} \quad z^n \in \ty^n_{V,\delta}(x^n)
\end{eqnarray*} 
hold, where
\begin{eqnarray}
\alpha &=&2^{-n(H(pV)+f_1(\delta))}\label{eq:4}\\
\beta &=&2^{-n(H(V|p)-f_2(\delta))}\label{eq:5}
\end{eqnarray}
with universal $f_1(\delta),f_2(\delta)>0$ satisfying
$\lim_{\delta\to\infty}f_1(\delta)=0=\lim_{\delta\to\infty}f_2(\delta)$. 
\end{lemma}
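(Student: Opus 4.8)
The plan is to establish the two claimed bounds in Lemma~\ref{alpha-beta} directly from the combinatorial properties of types, exactly in the spirit of the Csisz\'ar--K\"orner method of types referenced as \cite{csis2}. Throughout I would work with the version of (conditional) typicality used there, so that membership in $\ty^n_{p,\delta}$ controls the deviation of the empirical distribution of $x^n$ from $p$ in the variational (or per-letter) sense, and membership in $\ty^n_{V,\delta}(x^n)$ controls the joint empirical distribution of $(x^n,z^n)$ relative to $p\times V$ on the support.

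For the first inequality I would argue as follows. By the remark preceding the lemma, $x^n\in\ty^n_{p,\delta}$ together with $z^n\in\ty^n_{V,\delta}(x^n)$ gives $z^n\in\ty^n_{pV,2|A|\delta}$, so it suffices to upper bound $|\ty^n_{pV,2|A|\delta}|$. Here I would invoke the standard type-counting estimate: every sequence in $\ty^n_{pV,2|A|\delta}$ has a type close to $pV$, hence empirical entropy close to $H(pV)$, and the size of a single type class of type $q$ is at most $2^{nH(q)}$; combining this with the polynomial bound $(n+1)^{|C|}$ on the number of types and absorbing the polynomial factor and the entropy-continuity slack into a term $f_1(\delta)$ with $f_1(\delta)\to 0$ as $\delta\to 0$, one obtains $|\ty^n_{pV,2|A|\delta}|\le 2^{n(H(pV)+f_1(\delta))}=\alpha^{-1}$. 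The key continuity input is that $|H(q)-H(pV)|$ is small uniformly when $\|q-pV\|$ is small, which is where the universal, $|C|$- and $\delta$-dependent but not $n$-dependent function $f_1$ comes from.

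For the second inequality I would write $V^n(z^n\mid x^n)=\prod_{i=1}^n V_{}(z_i\mid x_i)=\prod_{(a,c)} V(c\mid a)^{N(a,c\mid x^n,z^n)}$ where $N(a,c\mid x^n,z^n)$ is the joint letter-count. Taking $-\tfrac1n\log$ of this gives $\sum_{(a,c)} \tfrac{N(a,c\mid x^n,z^n)}{n}\bigl(-\log V(c\mid a)\bigr)$, which, when $z^n\in\ty^n_{V,\delta}(x^n)$ and $x^n\in\ty^n_{p,\delta}$, is within a vanishing slack of $\sum_{a,c} p(a)V(c\mid a)\bigl(-\log V(c\mid a)\bigr)=H(V\mid p)$; here one must be slightly careful on letters $a$ with $p(a)$ small or zero, but the conditional-typicality definition of \cite{csis2} is set up precisely so that the count $N(a,c\mid x^n,z^n)$ is controlled by $N(a\mid x^n)$ and hence the contribution is negligible, again absorbed into a universal $f_2(\delta)$ with $f_2(\delta)\to 0$. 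This yields $-\tfrac1n\log V^n(z^n\mid x^n)\ge H(V\mid p)-f_2(\delta)$, i.e. $V^n(z^n\mid x^n)\le 2^{-n(H(V\mid p)-f_2(\delta))}=\beta$.

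The main obstacle, and the only place requiring genuine care rather than bookkeeping, is making the error terms $f_1(\delta)$ and $f_2(\delta)$ \emph{universal} — independent of $n$, of $p$, and of $V$ — as the statement demands. For $f_1$ this is handled by the uniform continuity of the entropy functional on the simplex $\pp(C)$ together with the $n$-free polynomial type bound $(n+1)^{|C|}$ (whose logarithm over $n$ tends to $0$ uniformly). For $f_2$ the subtlety is the unboundedness of $-\log V(c\mid a)$ when $V(c\mid a)$ is tiny; this is precisely why the proof uses the notion of conditional typicality from \cite{csis2}, in which sequences $z^n$ with $V(c\mid a)=0$ for some realised pair are simply excluded from $\ty^n_{V,\delta}(x^n)$, and the per-letter deviations are measured in a way that is insensitive to the size of $p(a)$, so that the resulting bound depends on $\delta$ and the alphabet sizes only. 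I would therefore state explicitly that $f_1,f_2$ are the standard such functions from \cite{csis2} and refer there for the elementary verification, rather than reproducing the type-continuity estimates in full.
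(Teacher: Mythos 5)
Your proposal is correct and is precisely the standard method-of-types argument (type counting plus uniform continuity of entropy for the cardinality bound, and the product/count representation of $V^n(z^n|x^n)$ for the probability bound) that the paper itself invokes by simply citing \cite{csis2} as its entire proof. The care you take about universality of $f_1,f_2$ and about letters with small $p(a)$ or vanishing $V(c|a)$ is exactly where the cited reference's definitions do the work, so there is no gap.
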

\begin{proof}
Cf. \cite{csis2}.
\end{proof}

\if0
In addition we need a further lemma which will be used to determine the rates at which reliable
transmission to the legitimate receiver is possible.
\begin{lemma}\label{output-bound}
Let $p, \tilde{p} \in \mathcal{P}(A)$ and two stochastic matrices $W, \widetilde{W}:A \to \mathcal{P}(B)$
be given. Further let $q,\tilde{q} \in \mathcal{P}(B)$ be the output distributions, the former generated
by $p$ and $W$ and the latter by $\tilde{p}$ and $\widetilde{W}$. Fix $\delta \in
(0,\frac{1}{4|A||B|})$. Then for every $n \in \nn$
\begin{equation*}
q^{\otimes n}(\ty^n_{\widetilde{W}, \delta}(\tilde{x}^n)) \leq (n+1)^{|A||B|}
  2^{-n(I(\tilde{p},\widetilde{W})-f(\delta))}
\end{equation*}
for all $\tilde{x}^n \in \ty^n_{\tilde{p},\delta}$ and 
\begin{equation*}
q^{\otimes n}(\ty^n_{W, \delta} (x^n)) \leq (n+1)^{|A||B|} 
  2^{-n(I(p,W)-f(\delta))}  
\end{equation*}
for all $x^n \in \ty^n_{p, \delta}$ holds for a universal $f(\delta) >0$ and $\lim_{\delta\to 0}
f(\delta)=0$.
\end{lemma}
\begin{proof}
Cf. \cite{wyrem}.
\end{proof}
\fi
The next lemma is a standard result from large deviation theory.
\begin{lemma}\label{chernoff}(Chernoff bounds)
Let $Z_1,\ldots,Z_L$ be i.i.d. random variables with values in $[0,1]$ and expectation
$\mathbb{E}Z_i=\mu$, and $0<\epsilon<\frac{1}{2}$. Then it follows that
\begin{equation*}
\textrm{Pr} \left\{ \frac{1}{L} \sum^L_{i=1} Z_i \notin [(1\pm\epsilon)\mu]  \right\} \leq 2\exp \left( -L\cdot
\frac{\epsilon^2\mu}{3} \right), \nonumber
\end{equation*}
where $[(1\pm\epsilon)\mu]$ denotes the interval $[(1 - \epsilon)\mu, (1+ \epsilon)\mu]$.
\end{lemma}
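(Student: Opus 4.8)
The plan is to derive the two-sided estimate from the classical exponential Markov inequality applied separately to the upper and lower tails, then combine by a union bound. First I would fix $t>0$ and use Markov's inequality on the random variable $e^{t\sum_i Z_i}$ to obtain
\[
\textrm{Pr}\Bigl\{\tfrac{1}{L}\sum_{i=1}^L Z_i \ge (1+\epsilon)\mu\Bigr\}
\le e^{-tL(1+\epsilon)\mu}\,\bigl(\mathbb{E}e^{tZ_1}\bigr)^L,
\]
using independence to factor the moment generating function. Since $Z_1\in[0,1]$, convexity of $x\mapsto e^{tx}$ on $[0,1]$ gives $e^{tZ_1}\le 1+(e^t-1)Z_1$ pointwise, hence $\mathbb{E}e^{tZ_1}\le 1+(e^t-1)\mu\le \exp\bigl((e^t-1)\mu\bigr)$. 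Substituting yields the bound $\exp\bigl(L\mu\,[(e^t-1)-t(1+\epsilon)]\bigr)$; optimising over $t$ (the choice $t=\ln(1+\epsilon)$) produces the standard relative-entropy exponent $-L\mu\,\bigl[(1+\epsilon)\ln(1+\epsilon)-\epsilon\bigr]$.

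Next I would treat the lower tail $\textrm{Pr}\{\tfrac1L\sum_i Z_i\le(1-\epsilon)\mu\}$ in the mirror-image fashion, applying Markov to $e^{-t\sum_i Z_i}$ with $t>0$, using $\mathbb{E}e^{-tZ_1}\le\exp\bigl((e^{-t}-1)\mu\bigr)$, and optimising at $t=-\ln(1-\epsilon)$ to get the exponent $-L\mu\,\bigl[(1-\epsilon)\ln(1-\epsilon)+\epsilon\bigr]$. The remaining work is purely analytic: show that for $0<\epsilon<\tfrac12$ both exponents are bounded above by $-\epsilon^2\mu/3$ in absolute value — i.e. $(1+\epsilon)\ln(1+\epsilon)-\epsilon\ge \epsilon^2/3$ and $(1-\epsilon)\ln(1-\epsilon)+\epsilon\ge\epsilon^2/3$ — which follows from the Taylor expansions $(1\pm\epsilon)\ln(1\pm\epsilon)\mp\epsilon=\tfrac{\epsilon^2}{2}\mp\tfrac{\epsilon^3}{6}+\cdots$ together with elementary monotonicity estimates on the restricted range. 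Finally, a union bound over the two tail events gives the factor $2$ in front, completing the claimed inequality.

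The only mildly delicate step is the last one, pinning down the constant: one must verify that the cruder symmetric exponent $\epsilon^2\mu/3$ genuinely lower-bounds both one-sided relative-entropy exponents for all $\epsilon\in(0,\tfrac12)$, rather than just for small $\epsilon$. I would handle this by checking the inequality $g(\epsilon):=(1-\epsilon)\ln(1-\epsilon)+\epsilon-\epsilon^2/3\ge 0$ (the lower-tail case is the binding one) via $g(0)=0$ and $g'(\epsilon)=-\ln(1-\epsilon)-2\epsilon/3\ge 0$ on $(0,\tfrac12)$, the latter again from $-\ln(1-\epsilon)\ge\epsilon\ge 2\epsilon/3$. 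Everything else is bookkeeping, so I would simply cite a standard reference (e.g.\ the large-deviation bounds in \cite{csis2} or a textbook) for the detailed computation and state the result in the form needed for the subsequent code constructions.
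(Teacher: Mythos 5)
The paper offers no proof of this lemma at all: it is introduced as ``a standard result from large deviation theory'' and used as a black box, so there is nothing in the text to compare your argument against except the standard literature. Your plan is the classical exponential-moment (Chernoff--Bernstein) derivation and it is essentially correct: Markov's inequality applied to $e^{\pm t\sum_i Z_i}$, the convexity bound $e^{tZ_1}\le 1+(e^t-1)Z_1$ valid because $Z_1\in[0,1]$, optimisation at $t=\ln(1+\epsilon)$ resp.\ $t=-\ln(1-\epsilon)$, comparison of the two relative-entropy exponents with $\epsilon^2/3$, and a union bound supplying the factor $2$.

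One small correction to the last, ``delicate'' step: the binding case is the upper tail, not the lower one. The expansions you quote show $(1+\epsilon)\ln(1+\epsilon)-\epsilon=\tfrac{\epsilon^2}{2}-\tfrac{\epsilon^3}{6}+\cdots$ while $(1-\epsilon)\ln(1-\epsilon)+\epsilon=\tfrac{\epsilon^2}{2}+\tfrac{\epsilon^3}{6}+\cdots$, so the lower-tail exponent in fact dominates $\epsilon^2/2$ for all $\epsilon\in(0,1)$ (your monotonicity check $-\ln(1-\epsilon)\ge\epsilon$ already gives this), whereas the upper-tail exponent dips below $\epsilon^2/2$ and is the one that genuinely needs the restriction on $\epsilon$. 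The required inequality $(1+\epsilon)\ln(1+\epsilon)-\epsilon\ge\epsilon^2/3$ closes in exactly the style you propose: the difference vanishes at $\epsilon=0$ and has derivative $\ln(1+\epsilon)-\tfrac{2}{3}\epsilon$, which is nonnegative on $(0,\tfrac12)$ since $\ln(1+\epsilon)\ge\epsilon-\tfrac{\epsilon^2}{2}\ge\tfrac{3}{4}\epsilon\ge\tfrac{2}{3}\epsilon$ there. With that check transferred to the correct tail, the argument is complete and matches what the paper implicitly relies on.
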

For the optimal random coding strategy of the AVWC we need the \textit{robustification technique} by
Ahlswede \cite{ahlsw2} which is formulated as a further lemma. Therefore
let $\Sigma_n$ be the group of permutations acting on $(1,2, \ldots ,n)$. Then every permutation $\sigma
\in \Sigma_n$ induces a bijection $\pi \in \Pi_n$ defined by $\pi: \mathcal{S}^n \to \mathcal{S}^n$ with
$\pi(s^n)=(s_{\sigma(1)}, \ldots ,s_{\sigma(n)})$ for all $s^n=(s_1, \ldots ,s_n) \in \mathcal{S}^n$ and
$\Pi_n$ denotes the group of these bijections.
\begin{lemma}\label{robust}(Robustification technique)
If a function $f: \mathcal{S}^n \to [0,1]$ satisfies
\begin{equation}\label{eq:7}
\sum_{s^n \in \mathcal{S}^n} f(s^n) q(s_1) \cdot \ldots \cdot q(s_n) \geq 1- \gamma
\end{equation}
for all $q \in \mathcal{P}_0 (n,\mathcal{S})$ and some $\gamma \in [0,1]$, then
\begin{equation}\label{eq:8}
\frac{1}{n!} \sum_{\pi \in \Pi_n} f(\pi(s^n)) \geq 1-3\cdot (n+1)^{|\mathcal{S}|} \cdot
\gamma \quad \forall s^n \in \mathcal{S}^n \enspace.
\end{equation}
\end{lemma}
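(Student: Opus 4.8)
The plan is to prove the robustification lemma by a counting argument over permutations, comparing the permutation average in \eqref{eq:8} with the i.i.d.\ averages in \eqref{eq:7} applied to the empirical distribution (type) of a fixed $s^n$.

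First I would fix $s^n \in \mathcal{S}^n$ and let $q \in \mathcal{P}_0(n,\mathcal{S})$ denote its type, i.e.\ $q(s) = N(s|s^n)/n$ where $N(s|s^n)$ counts occurrences of $s$ in $s^n$. The key observation is that the uniform distribution over the orbit $\{\pi(s^n) : \pi \in \Pi_n\}$ is exactly the uniform distribution over the type class $T_q^n := \{\tilde s^n : \text{type of } \tilde s^n = q\}$; hence
\[
\frac{1}{n!}\sum_{\pi \in \Pi_n} f(\pi(s^n)) = \frac{1}{|T_q^n|} \sum_{\tilde s^n \in T_q^n} f(\tilde s^n).
\]
On the other hand, the i.i.d.\ product measure $q^{\otimes n}$ is constant on $T_q^n$ and assigns to the whole class a probability $q^{\otimes n}(T_q^n) \ge (n+1)^{-|\mathcal{S}|}$ by the standard lower bound on the probability of a type class. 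Writing $B := \{\tilde s^n \in T_q^n : f(\tilde s^n) = 0\}$ — or more carefully working with the ``bad'' set where $f$ is small — the hypothesis \eqref{eq:7} for this particular $q$ gives $\sum_{\tilde s^n} f(\tilde s^n) q^{\otimes n}(\tilde s^n) \ge 1 - \gamma$, so the complementary mass where $f$ is small is at most controlled by $\gamma$.

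More precisely, I would argue as follows. Since $f \le 1$, \eqref{eq:7} yields $q^{\otimes n}(\{\tilde s^n : f(\tilde s^n) < 1/2\}) \le 2\gamma$ — no, better to avoid thresholds: directly, $1 - \gamma \le \sum_{\tilde s^n} f(\tilde s^n)q^{\otimes n}(\tilde s^n) \le q^{\otimes n}((T_q^n)^c) + \sum_{\tilde s^n \in T_q^n} f(\tilde s^n) q^{\otimes n}(\tilde s^n)$; since $q^{\otimes n}$ is uniform on $T_q^n$ with common value $q^{\otimes n}(\tilde s^n) = q^{\otimes n}(T_q^n)/|T_q^n|$, the sum over $T_q^n$ equals $q^{\otimes n}(T_q^n) \cdot \frac{1}{|T_q^n|}\sum_{\tilde s^n \in T_q^n} f(\tilde s^n)$. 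Combining with $q^{\otimes n}((T_q^n)^c) \le 1$ is too weak; instead I use that the left side restricted to off-type sequences contributes at most $1 - q^{\otimes n}(T_q^n)$, giving
\[
\frac{1}{|T_q^n|}\sum_{\tilde s^n \in T_q^n} f(\tilde s^n) \ge \frac{(1-\gamma) - (1 - q^{\otimes n}(T_q^n))}{q^{\otimes n}(T_q^n)} = 1 - \frac{\gamma}{q^{\otimes n}(T_q^n)} \ge 1 - (n+1)^{|\mathcal{S}|}\gamma.
\]
This already gives \eqref{eq:8} with constant $1$ rather than $3$; the factor $3$ in the statement is simply a convenient slack (e.g.\ if one uses the cruder type-class bound $q^{\otimes n}(T_q^n) \ge \tfrac{1}{3}(n+1)^{-|\mathcal{S}|}$, or absorbs lower-order terms), so I would either reproduce whichever type-counting constant the paper's reference \cite{ahlsw2,csis2} uses or just note that $1 \le 3$.

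The main obstacle is not conceptual but bookkeeping: one must be careful that \eqref{eq:7} is assumed for \emph{all} $q \in \mathcal{P}_0(n,\mathcal{S})$ including in particular the type of the given $s^n$, and that the identification of the permutation-orbit average with the uniform average over the type class is exact (it is, since permutations act transitively on each type class and $f$ is evaluated at $\pi(s^n)$, with the $n!$ normalization correctly accounting for the stabilizer). I would also state explicitly the elementary fact $|T_q^n| \cdot q^{\otimes n}(\tilde s^n) = q^{\otimes n}(T_q^n) \ge (n+1)^{-|\mathcal{S}|}$, citing \cite{csis2}, since the whole estimate hinges on it. Everything else is a one-line rearrangement.
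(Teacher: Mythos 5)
Your argument is correct and is essentially the standard proof of Ahlswede's robustification lemma (the paper itself gives no proof, deferring entirely to the citation \cite{ahlsw2}): the permutation average over the orbit of $s^n$ is the uniform average over its type class $T^n_q$, the product measure $q^{\otimes n}$ is constant there with $q^{\otimes n}(T^n_q)\ge (n+1)^{-|\mathcal{S}|}$, and applying the hypothesis to the type $q$ of $s^n$ and bounding the off-type contribution by $1-q^{\otimes n}(T^n_q)$ gives the claim. Your bookkeeping in fact yields the sharper constant $1$ in place of $3$, which of course implies the stated inequality.
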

\begin{proof}
The proof is given in \cite{ahlsw2}.
\end{proof}
To reduce the random code for the AVWC $\mathfrak{W}$ to a deterministic code we need the concept of
symmetrisability, which was established for ordinary AVCs in the following  representation by
\cite{erics1}, \cite{csisnara}.
\begin{definition}{\cite{csisnara}}
An AVC is symmetrisable if for some channel $U: A \to S$
\begin{equation}\label{eq:symmable}
\sum_{s \in S} W(y|x,s)U(s|x')=\sum_{s \in S} W(y|x',s)U(s|x)
\end{equation}
for all $x,x' \in A$, $y \in B$.
\end{definition}
A new channel defined by \eqref{eq:symmable} then would be symmetric with respect to all $x, x'\in A$. 
The authors of \cite{csisnara} proved the following theorem which is a concretion of Ahlswede's dichotomy
result for single-user AVC, which states that the deterministic code capacity $C$ is either $C=0$ or
equals the random code capacity. 
\begin{theorem}{\cite{csisnara}}\label{symmetrisable}
$C>0$ if and only if the AVC is non-symmetrisable. If $C>0$, then
\begin{equation}
C=\max_{p \in \mathcal{P}(A)} \min_{W \in \bar{\mathcal{W}}} I(p,W)
\end{equation}
\end{theorem}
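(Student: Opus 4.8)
The plan is to derive Theorem~\ref{symmetrisable} by combining three ingredients: Ahlswede's dichotomy, the easy converse for symmetrisable channels, and the positivity argument of \cite{csisnara} for non-symmetrisable channels. First I would record that the deterministic code capacity $C$ can only equal $0$ or the random code capacity $C_r:=\max_{p\in\pp(A)}\min_{W\in\bar{\mathcal W}}I(p,W)$, where $\bar{\mathcal W}=\mathrm{conv}\{W_s:s\in S\}$. The inequality $C\le C_r$ is immediate, since any deterministic code is a degenerate correlated random code, and the random code capacity is exactly the compound channel capacity $\max_p\min_W I(p,W)$: achievability follows by feeding the \emph{robustification technique} of Lemma~\ref{robust} with an optimal deterministic code for the compound channel $\bar{\mathcal W}$, and the matching converse is the classical (Fano-type) argument for random codes. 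That the only alternative to $C=0$ is $C=C_r$ is Ahlswede's \emph{elimination technique} \cite{ahlsw3}: a random code of rate close to $C_r$ is first reduced, via the Chernoff bound of Lemma~\ref{chernoff}, to one supported on only polynomially many deterministic codes $\mathcal C_n(\gamma)$, whose index is then prepended over a short block using \emph{any} deterministic code of positive rate; so \emph{once} $C>0$ is known, $C=C_r$ follows. Everything therefore reduces to deciding when $C>0$.

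Second, I would prove the ``only if'' direction: if $\{W_s\}$ is symmetrisable with symmetrising channel $U:A\to S$ as in~\eqref{eq:symmable}, then $C=0$. Take any $(n,J_n)$ code with $J_n\ge 2$ and let the jammer pick a ``fake message'' $m'$ uniformly at random and, independently of the true message $m$, generate the state sequence $s^n$ by passing the codeword associated with $m'$ through $U^{\otimes n}$. Applying~\eqref{eq:symmable} coordinatewise shows that the channel seen by the decoder is invariant under exchanging $m$ and $m'$; since the decoding sets are disjoint, the probability of correct decoding, averaged over $m,m'$, cannot exceed $\tfrac12+\tfrac1{2J_n}$, so the mean error under this randomised jamming is at least $\tfrac12-\tfrac1{2J_n}\ge\tfrac14$. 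Hence there is a fixed $s^n\in S^n$ with $e(\mathcal C_n)\ge\tfrac14$, so no positive rate is achievable and $C=0$. (The same circle of ideas also shows $C_r=0\Rightarrow$ symmetrisable: if $C_r=0$ then by a minimax theorem some $W^\ast\in\bar{\mathcal W}$ has identical rows, and the mixing weights $\lambda_s$ producing $W^\ast$ furnish, via $U(s|x):=\lambda_s$, a symmetrising channel; thus non-symmetrisability already forces $C_r>0$.)

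Third comes the crux — if $\{W_s\}$ is non-symmetrisable then $C>0$ — for which one cannot invoke the elimination of randomness, since that step presupposes $C>0$; instead one argues directly by random selection. Fix a type $p$ attaining $C_r=\max_p\min_W I(p,W)>0$ and draw $J_n=2^{n\varepsilon}$ codewords at random (e.g.\ i.i.d.\ uniformly over a suitable type class contained in $\ty^n_{p,\delta}$). Equip the decoder with a ``joint typicality plus disambiguation'' rule: $y^n$ is decoded to $j$ iff $(x^n_j,y^n)$ is jointly typical for some empirical state distribution and no competing codeword $x^n_{j'}$, together with $x^n_j$ and $y^n$, forms a configuration that a jammer could use to make $j'$ as plausible as $j$. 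Non-symmetrisability is exactly what controls the second event: it yields a uniform $\eta>0$ such that, for \emph{every} channel $U:A\to S$, the stochastic matrices $\sum_s W(\cdot|x,s)U(s|x')$ and $\sum_s W(\cdot|x',s)U(s|x)$ differ by at least $\eta$ for some $x\neq x'$, and a type-counting estimate in the spirit of Lemmas~\ref{typical}--\ref{alpha-beta} then turns this into an exponentially small probability that the random codebook admits any ``confusable'' triple, uniformly over all $s^n\in S^n$. A union bound over the polynomially many joint types and the $J_n^2$ pairs of messages drives the mean error to $0$ for $\varepsilon$ small enough, and averaging over the random codebook extracts one good deterministic code of positive rate; hence $C>0$, and with the first step $C=C_r=\max_p\min_W I(p,W)$.

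The main obstacle is the combinatorial heart of the third step: formulating the disambiguation clause in the decoding rule precisely and proving that non-symmetrisability delivers the claimed \emph{uniform} exponential bound on the probability of a confusable triple — uniformly over all jammer state sequences and over all joint types of $(x^n_j,x^n_{j'},y^n)$. This is the genuinely new content of \cite{csisnara} relative to Ahlswede's earlier work, and it is precisely why the positivity question cannot be settled by the elimination technique alone; the remaining pieces (robustification, elimination of randomness, the symmetrisable converse) are, by comparison, routine.
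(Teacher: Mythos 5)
First, a point of reference: the paper does not prove Theorem~\ref{symmetrisable} at all --- it is imported verbatim from \cite{csisnara} and used as a black box (e.g.\ in the discussion following Theorem~\ref{achievability}), so there is no in-paper proof to compare yours against; the only meaningful benchmark is the proof in the cited reference. Against that benchmark your decomposition is the right one: (i) the dichotomy $C\in\{0,C_r\}$ with $C_r=\max_{p}\min_{W\in\bar{\mathcal{W}}}I(p,W)$ via robustification plus elimination, (ii) symmetrisable $\Rightarrow C=0$ by the randomised-jammer argument, and (iii) non-symmetrisable $\Rightarrow C>0$ by an explicit decoder. Your step (ii) is complete and correct: the pairing of ordered message pairs gives exactly the $\tfrac12+\tfrac1{2J_n}$ bound on correct decoding, the extraction of a single bad $s^n$ from the randomised jammer is legitimate, and the argument survives stochastic encoders. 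The parenthetical observation that $C_r=0$ forces symmetrisability (via a row-constant $W^\ast\in\bar{\mathcal{W}}$ and $U(s|x):=\lambda_s$) is also correct and is what closes the ``if and only if.''

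The genuine gap is step (iii), and you have named it yourself: formulating the disambiguation clause of the decoder and proving the uniform exponential bound on confusable configurations is not an obstacle to be flagged --- it \emph{is} the theorem. As written, your decoder is a description of intent (``no competing codeword forms a configuration that a jammer could use to make $j'$ as plausible as $j$''), not a decoding rule; the uniform $\eta>0$ you extract from non-symmetrisability (a compactness statement about $U\mapsto\max_{x\neq x'}\bigl\Vert\sum_{s}W(\cdot|x,s)U(s|x')-\sum_{s}W(\cdot|x',s)U(s|x)\bigr\Vert$) is correct but is only the first of several lemmas needed. Missing entirely are the codeword-selection lemma (simultaneous control of all pairwise and triple joint types across the whole codebook) and the case analysis showing the decoder is well defined and has small average error for \emph{every} $s^n$; these occupy most of \cite{csisnara}. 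So your proposal is an accurate road map rather than a proof. Within the logic of the present paper this is harmless --- the authors deliberately cite the result rather than reprove it --- but as a self-contained argument it is incomplete at exactly the point where the content lies.
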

Here the RHS gives the random code capacity and $\bar{\mathcal{W}}$ denotes the convex closure of all
channels $W_s$ with $s\in S$, $S$ finite or countable.


\if0
\begin{lemma}\label{strongtypical}\cite{jahn}
Let $\{(X(s), Y(s)): s\in S\}$ be a family of random variables on $A \times B$. For a state sequence
$s^n=(s_1, \ldots, s_n) \in S^n$ let $X(\bar{s})$ be a random variable taking values on $A$ according to 
\begin{equation}\label{eq:8}
\textrm{Pr}(X(\bar{s})=x)=\frac{1}{n}\sum^n_{i=1} \textrm{Pr}(X(s_i)=x)
\end{equation}
and let $X^n(s^n)$ be a random variable on $A^n$ with
\begin{equation}
\textrm{Pr}(X^n(s^n)=x^n)= \prod^n_{i=1} \textrm{Pr}(X(s_i)=x_i)
\end{equation}
Then we can state the two following properties of typical sequences
\begin{eqnarray}
\textrm{Pr}(X^n(s^n) \notin \mathcal{T}^n_{[X(\bar{s}),\delta]}) \leq o(\delta,n) \quad \textrm{and}&& \\
\Big|\big( \bigcup_{{s} \in {S}} \mathcal{T}^n_{[X(s), Y({s})],\delta}\big)_{y^n} \Big| \leq
2^{n(\max_{s\in S} H(X(s)|Y(s))+o(\delta,n)) }. && \label{eq:11}
\end{eqnarray}
\end{lemma}
Thus if we rewrite \eqref{eq:8} as 
\begin{equation}\label{eq:12}
\begin{split}
\textrm{Pr}(X(\bar{s})=x)&=\sum_{\tilde{s} \in S} \frac{N(\tilde{s};s^n)}{n} \textrm{Pr}(X(\tilde{s})=x)\\
                                    &=\sum_{\tilde{s} \in S} p(\tilde{s}) \textrm{Pr}(X(\tilde{s})=x)
\end{split}
\end{equation}
with $p$ as  the type of the sequence $s^n$ ...
\begin{lemma}[Hit Lemma \cite{jahn}]\label{hit}
Let $\mathcal{H}=(V,\mathcal{E})$ be the hypergraph consisting of a
finite set $V$ and a finite family $\mathcal{E}=\{E_i \subset V: i=1,\ldots ,L\}$ of subsets of
$V$. Further let $X_1, \ldots , X_M$ be a family of i.i.d. random variables taking values in $V$.
Then 
\begin{equation*}
\textrm{Pr}(\{X_1,\ldots,X_M \cap E_i \neq \emptyset\}) \leq M \cdot \textrm{Pr}(X_1 \in E_i)
\end{equation*}
for every $M \in \mathbb{N}$.
\end{lemma}
\fi

\subsection{Random Code Construction}\label{Rancode}

First let us define the convex hull of the set of channels $\{W_s: s \in S\}$ by the set of channels
$\{W_{q}: q  \in \mathcal{P}(S)\}$, where $W_{q}$ is
defined by
\begin{equation}\label{eq:aver_ch}
W_{q}(y|x)=\sum_{s \in S} W(y|x,s) q(s),
\end{equation}
for all possible distributions $q \in \mathcal{P}(S)$. Accordingly we define $V_q$ and its convex hull $\{V_q:q \in
\mathcal{P}(S)\}$. Then we denote the convex closure of the set of channels $\{(W_s,V_s): s \in S\}$ by
$\overline{\mathfrak{W}}:=\{(W_q,V_q): q \in \mathcal{P}(\tilde{S}), \tilde{S} \subseteq
S, \tilde{S} \ \textrm{is finite}\}$. Occasionally, we restrict $q$ to be from the set of all types
$\mathcal{P}_0(n,S)$ of state sequences $s^n \in S^n$.
\begin{lemma}\label{cap_closure}
The secrecy capacity $C_S(\mathfrak{W})$ of the arbitrarily varying wiretap channel AVWC $\mathfrak{W}$
equals the secrecy capacity of the arbitrarily varying wiretap channel $\overline{\mathfrak{W}}$.  
\end{lemma}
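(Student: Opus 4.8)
The plan is to prove the two inequalities $C_S(\overline{\mathfrak{W}})\le C_S(\mathfrak{W})$ and $C_S(\mathfrak{W})\le C_S(\overline{\mathfrak{W}})$ separately, in both cases by arguing that one and the same sequence of codes that is good for one of the two AVWCs is automatically good for the other. The inequality $C_S(\overline{\mathfrak{W}})\le C_S(\mathfrak{W})$ is immediate: for $s\in S$ the point mass $\delta_s$ satisfies $W_{\delta_s}=W_s$ and $V_{\delta_s}=V_s$, so the state set $S$ of $\mathfrak{W}$ embeds into the state set of $\overline{\mathfrak{W}}$; hence any sequence of $(n,J_n)$ codes whose average error probability and whose maximal leakage $\max_{q^n}I(p_J;V^n_{q^n})$ over the state sequences of $\overline{\mathfrak{W}}$ tend to zero already has these properties over $S^n$, and therefore achieves the same secrecy rate for $\mathfrak{W}$.

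For the reverse inequality I would fix a sequence $(\mathcal{C}_n)_{n\in\nn}$ of $(n,J_n)$ codes achieving a secrecy rate $R_S$ for $\mathfrak{W}$ and show that it achieves $R_S$ for $\overline{\mathfrak{W}}$ as well. The one structural fact needed is the mixture representation: for any state sequence $q^n=(q_1,\dots,q_n)$ of $\overline{\mathfrak{W}}$, writing $\lambda(s^n):=\prod_{i=1}^n q_i(s_i)$ for the associated (finitely supported) probability distribution on $S^n$, one has
\[
W^n_{q^n}(y^n|x^n)=\sum_{s^n\in S^n}\lambda(s^n)\,W^n_{s^n}(y^n|x^n),\qquad V^n_{q^n}(z^n|x^n)=\sum_{s^n\in S^n}\lambda(s^n)\,V^n_{s^n}(z^n|x^n).
\]
For the reliability requirement this gives at once $W^n_{q^n}(D_j^c|x^n)=\sum_{s^n}\lambda(s^n)W^n_{s^n}(D_j^c|x^n)\le\max_{s^n\in S^n}W^n_{s^n}(D_j^c|x^n)$, so the average error of $\mathcal{C}_n$ evaluated over $\overline{\mathfrak{W}}$ is at most $e(\mathcal{C}_n)$ and still tends to $0$. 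For the strong secrecy requirement, composing the stochastic encoder $E$ with the mixture above shows that the channel from the message $J$ to the eavesdropper's output $Z^n$ induced by $V^n_{q^n}$ is again the convex combination, with weights $\lambda(s^n)$, of the channels induced by the $V^n_{s^n}$; since for a fixed input distribution $p_J$ the mutual information $I(p_J;\cdot)$ is convex in the channel --- equivalently, one may introduce an auxiliary variable $\hat{S}^n$ with $\Pr(\hat{S}^n=s^n)=\lambda(s^n)$, independent of $J$, and combine the data-processing inequality with $I(J;\hat{S}^n)=0$ --- it follows that
\[
I(p_J;V^n_{q^n})\le\sum_{s^n\in S^n}\lambda(s^n)\,I(p_J;V^n_{s^n})\le\max_{s^n\in S^n}I(p_J;V^n_{s^n}),
\]
which tends to $0$ by assumption. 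Since block length and message-set size are unchanged, $\liminf_n\frac{1}{n}\log J_n\ge R_S$ still holds, so $R_S$ is an achievable secrecy rate for $\overline{\mathfrak{W}}$, which gives $C_S(\mathfrak{W})\le C_S(\overline{\mathfrak{W}})$.

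Combining the two inequalities yields the claim. There is no real obstacle here: the reliability part is just convexity (in fact linearity) of the error functional in the channel, and the only point where one cannot simply invoke an inclusion of constraints is the strong secrecy criterion, which rests on the standard convexity of mutual information in the channel. The only thing to be careful about is to keep everything phrased in terms of the \emph{maximum} over state sequences, as in Definitions \ref{code} and \ref{code_R}, so that the mixture bound $\sum_{s^n}\lambda(s^n)(\cdot)\le\max_{s^n}(\cdot)$ applies directly; the same argument run for correlated random codes additionally gives $C_{S,\mathrm{ran}}(\mathfrak{W})=C_{S,\mathrm{ran}}(\overline{\mathfrak{W}})$.
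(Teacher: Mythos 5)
Your proposal is correct and follows essentially the same route as the paper: both directions rest on the mixture representation $W^n_{\tilde q}=\sum_{s^n}\tilde q(s^n)W^n_{s^n}$ (and likewise for $V$), with linearity of the error functional and convexity of $I(p_J;\cdot)$ in the channel giving $C_S(\mathfrak{W})\le C_S(\overline{\mathfrak{W}})$, and the embedding of $S$ via point masses giving the converse. Your write-up is in fact a cleaner organization of the paper's argument, which states the same two bounds somewhat implicitly.
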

\begin{proof}
The proof was given for an ordinary arbitrarily varying channel AVC without secrecy criterion in
\cite{csis2} and for an AVWC under the weak secrecy criterion in \cite{molav}. Let $\tilde{W}_1, \ldots, \tilde{W}_n$ be averaged channels as defined in \eqref{eq:aver_ch} and
a channel $W^n_{\tilde{q}} : A^n \to \mathcal{P}(B^n)$ with $\tilde{q}=\prod^n_{i=1} q_i$, $\tilde{q} \in
\mathcal{P}(S^n)$, $q_i \in \mathcal{P}(S)$ defined by
\begin{equation*}
W^n_{\tilde{q}}(y^n|x^n) =\prod^n_{i=1} \tilde{W}_i (y_i|x_i)  = \prod^n_{i=1} W_{{q}_i}(y_i|x_i) 
 =\sum_{s^n \in S^n} W^{n}(y^n|x^n,s^n) \tilde{q}(s^n)
\end{equation*} 
If we now use the same $(n,J_n)$ code $\mathcal{C}_n$ defined by the same pair of encoder and decoding sets
as for the AVWC $\mathfrak{W}$ the error probability for transmission of a single codeword by the channel
$W^n_{\tilde{q}}$ is given by
\begin{equation*}
\sum_{x^n \in A^n} E(x^n| j)  W^n_{\tilde{q}}(D_j^c| x^n)
 =\sum_{s^n \in S^n} \tilde{q}(s^n) \sum_{x^n \in A^n} E(x^n| j) W_{s^n}^{n}(D_j^c| x^n)
\end{equation*}
and we can bound the average error probability by
\begin{equation*}
\begin{split}
&\frac{1}{J_n} \sum^{J_n}_{j=1} \sum_{x^n \in A^n} 
E(x^n| j) W^n_{\tilde{q}} (D_j^c| x^n)\\
& \leq  \max_{s^n \in S^n}  \,  \frac{1}{J_n} \sum^{J_n}_{j=1} \sum_{x^n \in A^n} 
E(x^n| j) W_{s^n}^{n}(D_j^c| x^n)=e(\mathcal{C}_n) \enspace.  
\end{split}
\end{equation*} 
Otherwise, because $\mathfrak{W}$ is a subset of $\overline{\mathfrak{W}}^n$, which is the closure of
the set of channels $(W^n_{\tilde{q}},V^n_{\tilde{q}})$, the opposite inequality holds for the channel
$W^n_{\tilde{q}}$ that maximizes the error probability. Because $V^n_{\tilde{q}}$ is defined analogously
to $W^n_{\tilde{q}}$, we can define for the $(n,J_n)$ code
\begin{equation}
\hat{V}(z^n|j):=\sum_{x^n \in A^n} E(x^n| j)  V^n_{\tilde{q}}(z^n| x^n)
\end{equation}
for all $z^n \in C^n$, $j \in \mathcal{J}_n$. Then
\begin{equation}\label{eq:chann_J}
\hat{V}(z^n|j)  =\sum_{s^n \in S^n} \tilde{q}(s^n) \sum_{x^n \in A^n} E(x^n| j)  V^n_{{s^n}}(z^n|x^n)
 =\sum_{s^n \in S^n} \tilde{q}(s^n) \hat{V}^n_{{s^n}}(z^n|j)
\end{equation}
and because of the convexity of the mutual information in the channel $\hat{V}$ and \eqref{eq:chann_J} it
holds that
\begin{equation}
I(J,Z^n_{\tilde{q}}) \leq \sum_{s^n \in S^n} \tilde{q}(s^n) I(J;Z^n_{s^n}) \leq \sup_{s^n} I(J,Z^n_{s^n}).
\end{equation}
Now because $\{\hat{V}^n_{{s^n}}(z^n|j): s^n \in S^n\}$ is a subset of $\{\hat{V}(z^n|j): \tilde{q} \in
\mathcal{P}(S^n)\}$ we end in
\begin{equation*}
\sup_{\tilde{q} \in \mathcal{P}(S^n)} I(J,Z^n_{\tilde{q}})=  \sup_{s^n} I(J,Z^n_{s^n}) \enspace.
\end{equation*}
\qed \end{proof} 
Now we can proceed in the construction of the random code of the AVWC $\mathfrak{W}$. 
\begin{definition}\label{best}
We call a channel to the eavesdropper a best channel if there exist a channel $V_{q^*} \in  \{V_q:q \in
\mathcal{P}(S)\}$ such that all other channels from $\{V_q:q \in \mathcal{P}(S)\}$ are degraded versions
of $V_{q^*}$. If we denote the output of any channel $V_q$, $q\in \mathcal{P}(S)$ by $Z_q$ it holds
that 
\begin{equation}\label{eq:degr_best}
X \to Z_{q^*} \to Z_q, \quad \forall q \in \mathcal{P}(S).
\end{equation}
\end{definition}
\begin{proposition}\label{random_code}
Provided that there exist a best channel to the eavesdropper, for the random code secrecy capacity
$C_{S,\textrm{ran}}(\mathfrak{W})$ of the AVWC $\mathfrak{W}$ it holds that 
\begin{equation}\label{eq:rand_cap}
C_{S,\textrm{ran}}(\mathfrak{W}) \geq \max_{p \in \mathcal{P}(A)} ( \min_{q \in \mathcal{P}(S)}
I(p,W_{q})-\max_{q \in \mathcal{P}(S)}I(p,V_{q})).
\end{equation}
\end{proposition}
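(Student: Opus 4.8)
The plan is to establish the bound via Ahlswede's robustification technique, reducing the AVWC problem to a compound wiretap channel problem already solved in \cite{bjela2}. First I would recall that, by Lemma~\ref{cap_closure}, it suffices to work with the averaged channels, i.e. with the compound set $\{(W_q,V_q):q\in\mathcal{P}(S)\}$, and in fact it is enough to consider types $q\in\mathcal{P}_0(n,S)$ when invoking robustification. The starting point is the compound wiretap channel coding theorem from \cite{bjela2}: for the compound wiretap channel with legitimate set $\{W_q:q\in\mathcal{P}(S)\}$ and eavesdropper set $\{V_q:q\in\mathcal{P}(S)\}$, and for any fixed input distribution $p\in\mathcal{P}(A)$, there exist $(n,J_n)$ codes with $\frac1n\log J_n$ approaching $\min_q I(p,W_q)-\max_q I(p,V_q)$, whose average error probability over the compound set decays exponentially and which satisfy the strong secrecy criterion $\max_q I(p_J;V^n_q)\to 0$ exponentially fast. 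The hypothesis that there is a best channel $V_{q^*}$ is what lets me replace $\max_q I(p,V_q)$ by the single quantity $I(p,V_{q^*})$: by the degradedness \eqref{eq:degr_best} and the data-processing inequality, $I(p_J;V^n_q)\le I(p_J;V^n_{q^*})$ for every $q$, so controlling secrecy against $V_{q^*}$ alone controls it against the whole family, and this is precisely the feature of \cite{bjela2} that makes the single-letter rate expression clean here.

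Next I would apply the robustification technique (Lemma~\ref{robust}) to transfer this code from the compound setting to the arbitrarily varying setting. Fix a good compound-wiretap code $\mathcal{C}_n$ as above. Define, for each state sequence $s^n\in S^n$, the quantity $f(s^n)$ to be (one minus) the error probability of $\mathcal{C}_n$ on the channel realisation $W_{s^n}$, or more precisely a $[0,1]$-valued function that simultaneously certifies both the reliability event and the secrecy event being ``good'' for that $s^n$. Because the compound code works for every averaged channel $W_q$ with $q\in\mathcal{P}_0(n,S)$, the hypothesis \eqref{eq:7} of Lemma~\ref{robust} is satisfied with $\gamma$ exponentially small in $n$; the conclusion \eqref{eq:8} then says that averaging $f$ over all permutations $\pi\in\Pi_n$ gives something at least $1-3(n+1)^{|S|}\gamma$ for \emph{every} $s^n\in S^n$. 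This exhibits a correlated random code for $\mathfrak{W}$ — the common randomness being the uniform choice of $\pi\in\Pi_n$, applied jointly to encoder and decoding sets — whose mean average error probability over all $s^n$ still tends to zero. The rate is unchanged, since permuting coordinates does not alter $J_n$.

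The remaining, and most delicate, step is to verify that the permuted random code still satisfies the strong secrecy criterion \eqref{eq:3} for the eavesdropper, i.e. that $\max_{s^n}\sum_\pi \frac1{n!} I(p_J,V^n_{s^n};\mathcal{C}(\pi))\to 0$. This is the part where integrating strong secrecy into the robustification technique is genuinely new: robustification as stated in Lemma~\ref{robust} is designed to preserve a reliability ($[0,1]$-valued) functional, and mutual information is not \emph{a priori} of that form, nor does it behave as simply under permutation-averaging. I would handle this by choosing $f$ to also encode the secrecy-good event — for instance, by incorporating an indicator that the eavesdropper's mutual information under $V_{q^*}$ (equivalently, via a variational/total-variation surrogate for $I(p_J;V^n_{q^*})$ as used in \cite{bjela2}) is below an exponentially small threshold. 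The permutation invariance of the construction, together with the degradedness reduction to the single channel $V_{q^*}$, lets me bound the permutation-averaged secrecy quantity by the same $\gamma$-type exponential bound, and then pass from the total-variation surrogate back to mutual information using the standard continuity estimate (Fano-type / $\log|C|$ bound on the entropy defect). Combining the reliability and secrecy estimates, one obtains a correlated random $(n,J_n,\Gamma,\mu)$ code achieving any rate below $\min_q I(p,W_q)-I(p,V_{q^*})$; taking the supremum over $p\in\mathcal{P}(A)$ and noting $\max_q I(p,V_q)=I(p,V_{q^*})$ in the best-channel case yields \eqref{eq:rand_cap}. The main obstacle is thus the secrecy transfer under permutation-averaging, which forces one to phrase the strong secrecy guarantee as a robustifiable $[0,1]$-valued event rather than directly in terms of mutual information.
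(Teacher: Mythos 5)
Your overall architecture matches the paper's proof exactly: build a deterministic code for the compound wiretap channel $\overline{\mathcal{W}}=\{(W^n_q,V^n_q):q\in\mathcal{P}(S)\}$ using the construction of \cite{bjela2}, then transfer it to the AVWC by Ahlswede's robustification (Lemma~\ref{robust}) with the permutations $\pi\in\Pi_n$ as common randomness, and use the best-channel hypothesis to reduce the eavesdropper criterion to the single channel $V_{q^*}$. The one place where you diverge — and where your description would get you into trouble if executed literally — is the secrecy step. The paper applies Lemma~\ref{robust} \emph{only} to the reliability functional $f(s^n)=\frac{1}{J_n}\sum_j\frac{1}{L_n}\sum_l W^n(D_j|x_{jl},s^n)$; secrecy is never robustified. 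Instead it observes (i) by degradedness \eqref{eq:degr_best} and data processing, $\max_{s^n}I(p_J;V^n_{s^n})=I(p_J;V^{n}_{q^*})$ as in \eqref{eq:25}, and (ii) since $V^{n}_{q^*}$ is an i.i.d.\ product channel, the joint distribution $p^{\mathcal{C}^\pi}_{JZ^n_{q^*}}$ is invariant under the permutation, see \eqref{eq:31}, so \emph{every} permuted code $\mathcal{C}^\pi$ individually satisfies $\max_{s^n}I(p_J;V^n_{s^n};\mathcal{C}^\pi)\le\epsilon'$ with the same bound as the unpermuted code. No averaging over $\pi$ and no $[0,1]$-valued surrogate is needed. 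By contrast, your suggestion to fold an $s^n$-dependent ``secrecy-good'' indicator into $f$ and verify hypothesis \eqref{eq:7} cannot be carried out from the compound result: the compound code controls $I(p_J;V^n_{\tilde q})$ for the \emph{averaged} channel $\tilde q=q^{\otimes n}$, and by convexity of mutual information in the channel one only has $I(p_J;V^n_{q^{\otimes n}})\le\sum_{s^n}q^{\otimes n}(s^n)I(p_J;V^n_{s^n})$ — the wrong direction to conclude that the secrecy event holds for $q^{\otimes n}$-most $s^n$. Since you already invoke the degradedness reduction to $V_{q^*}$, you should simply note that this makes the secrecy criterion independent of $s^n$ and permutation-invariant, and drop the proposed robustification of the secrecy event; with that correction your argument coincides with the paper's.
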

\begin{proof}
The proof is based on Ahlswedes \textit{robustification technique} \cite{ahlsw2} and is divided in two parts:\\
\textit{step 1} ): 
The set
\begin{equation*}
\overline{\mathcal{W}} := \{ (W^{n}_q, V^{ n}_q): q \in \mathcal{P}(S) \}
\end{equation*}
corresponds to a compound wiretap channel indexed by the set of all possible distributions $q \in
\mathcal{P}(S)$ on the set of states $S$. First we show, that there exist a deterministic code for
the compound wiretap channel $\overline{\mathcal{W}}$ that achieves the lower bound on the random code
secrecy capacity of the AVWC $\mathfrak{W}$ given in \eqref{eq:rand_cap}.  

In \cite{bjela2} it was shown that for a compound wiretap channel $\{(W_t,V_t): t \in \theta\}$ without
channel state information at the legitimate receivers the secrecy capacity is bounded from below by
\begin{equation}\label{eq:comp_cap}
C_{S,\textrm{comp}} \geq \max_{p \in \mathcal{P}(A)} ( \min_{t \in \theta}I(p,W_{s})-\max_{t \in
  \theta}I(p,V_{s})). 
\end{equation}
In accordance with the proof of \eqref{eq:comp_cap} in \cite{bjela2} we define a set  of i.i.d. random variables
$\{X_{jl}\}_{j \in [J_n], l \in [L_n]}$ each according to the distribution $p' \in \mathcal{P}(A^n)$ with
\begin{equation}\label{eq:pruned}
p'(x^n):= \left \{ \begin{array}{ll}
\frac{p^{\otimes n}(x^n)}{p^{\otimes n}(\ty^n_{p,\delta})} & \textrm{if $x^n \in \ty^n_{p,\delta}$},\\
0 &  \textrm{otherwise},
\end{array} \right.
\end{equation}
for any $p \in \mathcal{P}(A)$, and where $J_n$ and $L_n$ are chosen as
\begin{eqnarray}
J_n&=& \lfloor 2^{n[\inf_{q \in \mathcal{P}(S)} I(p,W_{q})-\sup_{q \in \mathcal{P}(S)} I(p,V_{q})
  -\tau]} \rfloor \label{eq:14}\\
L_{n}&=& \lfloor 2^{n[\sup_{q \in \mathcal{P}(S)} I(p,V_{q})+ \frac{\tau}{4}] } \rfloor \label{eq:15}
\end{eqnarray}
with $\tau>0$. Now we assume that there exist a best channel to the eavesdropper $V_{q^*}$ in contrast to
the proof in \cite{bjela2}. Hence by the definition of $V_{q^*}$ in \eqref{eq:degr_best} and because
the mutual Information $I(p,V)$ is convex in $V$ and every member of $\{V_q\}_{q \in \mathcal{P}(S)}$ 
is a convex combination of the set $\{V_s\}_{s \in S}$, it holds that 
\begin{equation}\label{eq:best_channel}
\begin{split}
I(p,V_{q^*})=\sup_s I(p,V_s)=\sup_{q \in \mathcal{P}(S)} I(p,V_{q})
\end{split}
\end{equation}
for all $p\in \mathcal{P}(A)$. Note that because of \eqref{eq:best_channel} for $|S| < \infty$ $V_{q^*}
\in \{V_{s}: s  \in S\}$, which means that $q^*$ is a one-point distribution. 
 
By the definition of the compound channel $\overline{\mathcal{W}}$ the channels to the eavesdropper are
of the form
\begin{equation}
V^{n}_q (z^n|x^n) :=\prod^n_{i=1} V_{q}(z_i|x_i)
\end{equation}
for all $q \in \mathcal{P}(S)$. 
Then following the same approach as in the proof in \cite{bjela2} we define
\begin{equation*}
\tilde{Q}_{q,x^n}(z^n)=V_q^{n}(z^n|x^n)\cdot\mathbf{1}_{\mathcal{T}^n_{V_q,\delta}(x^n)}(z^n),
\end{equation*}
and 
\begin{equation}\label{eq:12}
\Theta'_q(z^n) =\sum_{x^n \in \mathcal{T}^n_{p,\delta}} p'(x^n)\tilde{Q}_{q,x^n}(z^n).
\end{equation}
for all $z^n\in C^n$.
Now let $\mathcal{B}:=\{z^n \in C^n : \Theta'_q(z^n) \geq \epsilon\alpha_q\}$ where
$\epsilon=2^{-nc'\delta^2}$ (cf. Lemma \ref{typical}) and $\alpha_q$ is from (\ref{eq:4}) in
Lemma \ref{alpha-beta} computed with respect to $p$ and $V_q$. By Lemma \ref{alpha-beta} the support of
$\Theta'_q$ has cardinality $\leq \alpha^{-1}_q$ since for each $x^n \in \mathcal{T}^n_{p,\delta}$ it holds
that $\mathcal{T}^n_{V_q,\delta}(x^n) \subset \mathcal{T}^n_{pV_q, 2|A|\delta}$, which implies that
$\sum_{z^n \in \mathcal{B}} \Theta_q(z^n) \geq 1-2\epsilon$, if 
\begin{eqnarray}
\Theta_q(z^n)&=&\Theta'_q(z^n)\cdot\mathbf{1}_{\mathcal{B}}(z^n) \quad \textrm{and} \nonumber\\
Q_{q,x^n}(z^n)&=&\tilde{Q}_{q,x^n}(z^n) \cdot \mathbf{1}_{\mathcal{B}}(z^n).\label{eq:13}
\end{eqnarray}
Now it is obvious from (\ref{eq:12}) and the definition of the set $\mathcal{B}$ that for any $z^n\in \mathcal{B}$
$\Theta_q(z^n)=\mathbb{E}Q_{q,X_{jl}}(z^n)\ge \epsilon \alpha_q$ if $\mathbb{E}$ is the expectation
value with respect to the distribution $p'$. Let $\beta_q$ defined as in \eqref{eq:5} with respect to
$V_q$. For the random variables $\beta^{-1}_q
Q_{q,X_{jl}}(z^n)$ define the event
\begin{equation}\label{eq:17}
\iota_j(q)=\bigcap_{z^n \in C^n} \left\{\frac{1}{L_{n}}\sum_{l=1}^{L_{n}} Q_{q,X_{jl}}(z^n) \in [(1 \pm
  \epsilon) \Theta_q(z^n)]\right \},
\end{equation}
and keeping in mind that  $\Theta_q(z^n) \geq \epsilon \alpha_q$ for all $z^n \in \mathcal{B}$. Then it
follows that for all $j \in [J_n]$ and for all $s\in S$
\begin{equation}\label{eq:18}
\textrm{Pr}\{ (\iota_j(q))^c\} \leq 2 |C|^n
\exp  \Big(- L_{n} \frac{ 2^{-n[I(p,V_q)+g(\delta)]}}{3}   \Big)
\end{equation}
by Lemma \ref{chernoff}, Lemma \ref{alpha-beta}, and our choice $\epsilon=2^{-nc'\delta^2}$ with 
$g(\delta):=f_1(\delta)+f_2(\delta)+3c'\delta^2$. Making $\delta>0$ sufficiently small we have for all 
sufficiently large $n\in \mathbb{N}$
\[ L_{n} 2^{-n[I(p,V_q)+g(\delta)]}\ge 2^{n\frac{\tau}{8}}. \]
Thus, for this choice of $\delta$ the RHS of (\ref{eq:18}) is double exponential in $n$ uniformly in
$q \in \mathcal{P}(S)$  and can be made smaller than $\epsilon J_n^{-1}$ for all $j \in [J_n]$ and all
sufficiently large $n \in \mathbb{N}$. I.e. 
\begin{equation}\label{eq:19}
 \textrm{Pr}\{ (\iota_j(q))^c\} \leq \epsilon J_n^{-1} \quad \forall q \in \mathcal{P}(S)
\end{equation} 

Now we will show that we can achieve reliable transmission to the legitimate receiver governed by
$\{(W^{n}_q:  q \in \mathcal{P}(S) \}$ for all messages $j\in [J_n]$ when randomising over
the index  $l \in L_n$ but without the need of decoding $l \in [L_n]$. To this end define
$\mathcal{X}=\{X_{jl}\}_{j \in [J_n], l \in [L_n]}$ to be the set of random variables with $X_{jl}$ are
i.i.d. according to $p'$ defined in \eqref{eq:pruned}. Define now the random decoder
$\{D_{j}(\mathcal{X}) \}_{j \in [J_n]} \subseteq B^n$ analogously as in \cite{bjela2}, \cite{bjela3}.
\if0
by
\begin{equation}\label{eq:achiev-is-5}
 D_j(\mathcal{X}):=D'_j(\mathcal{X})\cap \Big( \bigcup_{\substack{j'\in [J_n] \\ j'\neq j}}
   D'_{j'}(\mathcal{X}) \Big)^c. 
\end{equation} 
with
\begin{equation}\label{eq:achiev-is-4}
 D'_j(\mathcal{X}):=\bigcup_{q' \in \mathcal{P}(S)}\bigcup_{k\in[L_{n}]} \ty_{W_{q'},\delta}^{n}(X_{jk}),
\end{equation} 
and consequently the random average probabilities of error for a specific channel $W_q$, $q \in
\mathcal{P}(S)$ by 
\begin{equation}\label{eq:error-t}
\lambda_n^{(q)}(\mathcal{X}):=\frac{1}{J_n}\sum_{j\in [J_n]}\frac{1}{L_{n}}\sum_{l\in[L_{n,t}]} 
  W_t^{\otimes n}((D_j(\mathcal{X}) )^{c}| X_{jl} ) .
\end{equation}
\fi
Then it was shown by the authors, that there exist a sequence of $(n,J_n)$ codes for the compound wiretap
channel in the particular case without CSI  with arbitrarily small mean average error 
\begin{equation*}
\mathbb{E}_{\mathcal{X}}(\lambda^{(q)}_n(\mathcal{X})) \leq 2^{-na}
\end{equation*}
for all $q \in \mathcal{P}(S)$ and sufficiently large $n \in \mathbb{N}$. Additionally we define for each
$q \in \mathcal{P}(S)$
\begin{equation}\label{eq:20}
\iota_0(q)=\{\lambda^{(q)}_n(\mathcal{X})) \leq 2^{-n\frac{a}{2}}\}
\end{equation}
and set
\begin{equation}\label{eq:21}
\iota:=\bigcap_{q\in \mathcal{P}_0(n,S)}\bigcap_{j=0}^{J_n}\iota_j(q)
\end{equation}
Then with \eqref{eq:19}, \eqref{eq:20} and applying the union bound we obtain 
\begin{equation*}
\textrm{Pr}\{ \iota^c\} 
\leq 2^{-nc}
\end{equation*}
for a suitable positive constant $c>0$ and all sufficiently large $n\in\mathbb{N}$ (Cf. \cite{bjela2}).\\
Hence, we have shown that there exist realisations $\{x_{jl}\}$ of $\{X^n_{jl}\}_{j \in [J_n], l \in
  [L_n]}$ such that $x_{jl} \in \iota$ for all $j \in [J_n]$ and $l \in [L_n]$. 
Now following the same argumentation as in \cite{bjela2}, \cite{bjela3} we obtain that there is a
sequence of $(n,J_n)$ codes that for all codewords $\{x_{jl}\}$ it follows by construction that
\begin{equation}\label{eq:22}
\frac{1}{J_n}\sum_{j \in [J_{n}]} \frac{1}{L_n}\sum_{l \in[L_n]}W^{n}_{q}(D_{j}^c|x_{jl})\le 2^{-n a'}
\end{equation} 
is fulfilled for $n \in \mathbb{N}$ sufficiently large and for all $q \in \mathcal{P}(S)$ with
$a'>0$. So we have found a $(n,J_n)$ code with average error probability upper bounded by
\eqref{eq:22}. Further, for the given code and a random variable $J$ uniformly
distributed on the message set $\{1, \ldots, J_n\}$ it holds that  
\begin{equation}\label{eq:23}
I(p_J;V^{n}_q) \leq \epsilon'
\end{equation}
uniformly in $q \in \mathcal{P}(S)$. Both \eqref{eq:22} and \eqref{eq:23} ensure that in the scenario of
the compound wiretap channel $\overline{\mathcal{W}}$ the legitimate receiver can identify each message
$j$ from the message set $\{1, \ldots, J_n\}$ with high probability, while at the same time the
eavesdropper receives almost no information about it. That is, that all numbers $R_S$ with
\begin{equation}\label{eq:24}
R_S \leq \inf_{q \in \mathcal{P}(S)} I(p,W_{q})-\sup_{q \in \mathcal{P}(S)}I(p,V_{q})
\end{equation}
are achievable secrecy rates of the compound wiretap channel $\overline{\mathcal{W}}$.\\
\textit{step 2} ): \textit{Robustification} :
In the second step we derive from the deterministic $(n,J_n)$ code for the above mentioned compound
wiretap channel $\overline{\mathcal{W}}$ a $(n,J_n)$ random code $\mathcal{C}^{\textrm{ran}}_n$ for the
AVWC $\mathfrak{W}$, which achieves the same secrecy rates.  
We note first that by \eqref{eq:best_channel} and \eqref{eq:23}
\begin{equation}\label{eq:25}
\max_{s^n \in S^n} I(p_J,V_{s^n})=I(p_J,V^{n}_{q^*}) \leq \epsilon',
\end{equation}
which means, that, due to the assumption of a best channel to the eavesdropper, the code achieving the
secrecy rate for the best channel to the eavesdropper fulfills the secrecy criterion for a channel with
any state sequence $s^n \in S^n$. 
Now, as already mentioned we use the robustification technique (cf. Lemma \ref{robust}) to derive from the
deterministic code $\mathcal{C}_{\overline{\mathcal{W}}}= \{x_{jl}, D_{j}: j \in [J_n], l \in [L_n] \}$
of the compound wiretap channel $\overline{\mathcal{W}}$ the random code for the AVWC
$\mathfrak{W}$. Therefore, for now let $S$ to be finite. With \eqref{eq:22} it holds that
\begin{equation}
\frac{1}{J_n}\sum_{j \in [J_{n}]} \frac{1}{L_n}\sum_{l \in[L_n]} \sum_{s^n \in S^n} 
W^{n} (D_{j}|x_{jl},s^n)  q^{\otimes n}(s^n)  \geq 1 - 2^{-n a'}
\end{equation}
for all $q^{\otimes n}= \prod^n_{i=1}q$ and in particular for all $q \in \mathcal{P}_0(n,S)$. Now let
$\pi \in \Pi_n$ be the bijection on $S^n$ induced by the permutation $\sigma \in \Sigma_n$. Since
\eqref{eq:7} is fulfilled with 
\begin{equation}
f(s^n)=\frac{1}{J_n}\sum_{j \in [J_{n}]} \frac{1}{L_n}\sum_{l \in[L_n]} 
W^{ n} (D_{j}|x_{jl},s^n)
\end{equation}
it follows from \eqref{eq:8} that
\begin{equation}\label{eq:28}
\frac{1}{n!}\sum_{\pi \in \Pi_n} \frac{1}{J_n}\sum_{j \in [J_{n}]} \frac{1}{L_n}\sum_{l \in[L_n]}  
W^{n} (D_{j}|x_{jl},\pi(s^n))  
\geq 1 - (n+1)^{|S|}2^{-n a'}
\end{equation}
for all $s^n \in S^n$. Hence by defining $\mathcal{C}^{\pi} := \{ \pi^{-1}(x^n_{jl}), \pi^{-1}(D_{j}) \}$
as a member of a family of codes $\{\mathcal{C}^{\pi}\}_{\pi \in  \Pi_n}$ together with a random variable
$K$ distributed according to $\mu$ as the uniform distribution on $\Pi_n$, \eqref{eq:28} is equivalent to
\begin{equation}\label{eq:29}
\mathbb{E}_{\mu}(\bar{\lambda}_n(\mathcal{C}^{K}, W^n_{s^n})) \leq (n+1)^{|S|}2^{-na'}=:\lambda_n
\end{equation}
with $\bar{\lambda}_n(\mathcal{C}^{\pi},W^n_{s^n})$ as the respective average error probability for
$K=\pi$ and it
holds for all $s^n \in S^n$. Thus we have shown that 
\begin{equation}\label{eq:30}
\mathcal{C}^{\textrm{ran}}_n :=\{ (  \pi^{-1}(x_{jl}), \pi^{-1}(D_{j}) ): j \in [J_n], l \in [L_n], \pi \in \Pi_n,
\mu \}
\end{equation}
is a $(n,J_n,\Pi_n,\mu)$ random code for the AVC channel $\mathcal{W}^n=\{W_{s^n}:s^n \in S^n\}$ with the
mean average error probability $\mathbb{E}_{\mu}(\bar{\lambda}_n(\mathcal{C}^{K}, W^n_{s^n}))$ upper bounded by
$\lambda_n$ as in
\eqref{eq:29}. 

Now it is easily seen that
\begin{equation}\label{eq:31}
p^{\mathcal{C}^{\pi}}_{JZ^n_{q^*}}(j,z^n)  =
\frac{1}{J_n}\frac{1}{L_n} \sum^{L_n}_{l=1} V^{ n}_{q^*} (\pi^{-1}(z^n)| \pi^{-1}(x_{jl}))
 =p_{JZ^n_{q^*}}.
\end{equation}
Actually, it still holds that
\begin{equation}
p^{\mathcal{C}^{r}}_{JZ^n_{q^*}}(j,z^n) 
= \frac{1}{n!}\sum_{\pi \in \Pi_n} p^{\mathcal{C}^{\pi}}_{JZ^n_{q^*}}(j,z^n) = p_{JZ^n_{q^*}} \enspace.
\end{equation}
With \eqref{eq:31} and the representation of the mutual information by the information divergence we
obtain from \eqref{eq:25}
\begin{equation}
\begin{split}
\mathbb{E}_{\mu}  (D(p^{\mathcal{C}^{K}}_{JZ^n_{q^*}}  || p_J \otimes
p^{\mathcal{C}^{K}}_{Z^n_{q^*}}))
& =\frac{1}{n!} \sum_{\pi \in \Pi_n} D(p^{\mathcal{C}^{\pi}}_{JZ^n_{q^*}} || p_J \otimes
p^{\mathcal{C}^{\pi}}_{Z^n_{q^*}}) \\
& =\frac{1}{n!} \sum_{\pi \in \Pi_n} D(p_{JZ^n_{q^*}} || p_J \otimes p_{Z^n_{q^*}})
   =I(p_J , V^{n}_{q^*}) \leq \epsilon' \enspace.
\end{split}
\end{equation}
Thus we have constructed a random $(n,J_n,\Gamma,\mu)$ code $\mathcal{C}^{\textrm{ran}}_n$ with mean
average error probability bounded for all $s^n \in S^n$ as in \eqref{eq:29} and which fulfills the strong
secrecy criterion almost surely, provided that there exist a best channel to the eavesdropper. By the
construction of the random code it follows that the secrecy rates given by \eqref{eq:24} for the compound
wiretap channel $\overline{\mathcal{W}}$ achieved by the deterministic code
$\mathcal{C}_{\overline{\mathcal{W}}}$ are achievable secrecy rates for the AVWC $\mathfrak{W}$ with
random code $\mathcal{C}^{\textrm{ran}}_n$. That is, we have shown that all rates $R_S$ with
\begin{equation}\label{eq:34}
R_S \leq \max_{p\in \mathcal{P}(A)} ( \min_{q \in \mathcal{P}(S)} I(p,W_{q})-\max_{q \in \mathcal{P}(S)} 
I(p,V_{q})) \enspace.
\end{equation}
are achievable secrecy rates of the arbitrarily varying wiretap channel AVWC with random code
$\mathcal{C}^{\textrm{ran}}_n$. \qed
\end{proof}

\subsection{Deterministic Code Construction}\label{determ_code}

Because the code $\mathcal{C}^{\pi}$ that is used for the transmission of a single message is subjected
to a random selection, reliable transmission can only be guaranteed if the outcome of the random experiment
can be shared by both the transmitter and the receiver. One way to inform the receiver about the code
that is chosen is to add a short prefix to the actual codeword. Provided that the number of codes is
small enough, the transmission  of these additional prefixes causes no essential loss in rate. In the
following we use the \textit{elimination technique} by Ahlswede  \cite{ahlsw3} which has introduced the
above approach to derive deterministic codes from random codes for determining capacity of arbitrarily
varying channels. Temporarily we drop the requirement of a best channel to the eavesdropper and state the
following theorem. 
\begin{theorem}\label{achievability}
\begin{enumerate}
\item
Assume that for the AVWC $\mathfrak{W}$ it holds that $C_{S,\mathrm{ran}}(\mathfrak{W})>0$. Then
the secrecy capacity $C_S(\mathfrak{W})$ 
equals its random code secrecy capacity $C_{S,\mathrm{ran}}(\mathfrak{W})$, 
\begin{equation}
C_S(\mathfrak{W}) = C_{S,\mathrm{ran}}(\mathfrak{W}), 
\end{equation}
if and only if the channel to the legitimate receiver is non-symmetrisable.
\item
If $C_{S,\mathrm{ran}}(\mathfrak{W})=0$ it always holds that $C_S(\mathfrak{W}) =0$.
\end{enumerate}
\end{theorem}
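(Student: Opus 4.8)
The plan is to mirror Ahlswede's elimination technique, now carrying the strong secrecy criterion through both halves of the argument. I would structure the proof as follows. For part (1), the direction "$C_S(\mathfrak{W}) = C_{S,\mathrm{ran}}(\mathfrak{W})$ implies non-symmetrisability of $W$" is the easy one: if $W$ were symmetrisable, then by Theorem \ref{symmetrisable} the ordinary deterministic code capacity of the AVC $\{W_{s^n}\}$ is zero, hence no sequence of $(n,J_n)$ codes with vanishing average error and positive rate exists at all, so $C_S(\mathfrak{W})=0<C_{S,\mathrm{ran}}(\mathfrak{W})$, a contradiction. For the converse direction, assume $W$ is non-symmetrisable and $C_{S,\mathrm{ran}}(\mathfrak{W})>0$. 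Since any deterministic code is a (degenerate) random code we always have $C_S(\mathfrak{W}) \le C_{S,\mathrm{ran}}(\mathfrak{W})$, so the work is the achievability $C_S(\mathfrak{W}) \ge C_{S,\mathrm{ran}}(\mathfrak{W})$.

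For the achievability I would proceed in two steps, exactly the \emph{random code reduction} followed by the \emph{elimination of randomness}. \textbf{Step 1 (reduction of the amount of common randomness).} Starting from a random $(n,J_n,\Gamma,\mu)$ code achieving rate close to $C_{S,\mathrm{ran}}(\mathfrak{W})$ with mean average error $\to 0$ and $\max_{s^n} \sum_\gamma I(p_J,V^n_{s^n};\mathcal{C}(\gamma))\mu(\gamma)\to 0$, I would show that one may replace $\mu$ by the uniform distribution on a sub-family of only polynomially many codes $\gamma_1,\dots,\gamma_{K_n}$ with $K_n = n^2$ (say), while preserving both the reliability bound and the secrecy bound up to a negligible loss. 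This is the standard Chernoff/union-bound argument (Lemma \ref{chernoff}): for each fixed $s^n$ the average error is an average of $[0,1]$-valued quantities over $\gamma$, and likewise $I(p_J,V^n_{s^n};\mathcal{C}(\gamma))\in[0,\log J_n]$ is bounded, so after normalising by $\log J_n$ the Chernoff bound makes the probability that a random draw of $K_n$ codes fails to approximate the $\mu$-average doubly exponentially small; since there are only $|S|^n$ state sequences $s^n$, a union bound over $s^n$ still leaves a positive probability that a good sub-family exists. The key point is that secrecy, being expressed as an \emph{averaged} mutual information bounded by $\log J_n$, is amenable to the very same empirical-mean approximation as the error probability — this is the integration of strong secrecy into the elimination technique.

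\textbf{Step 2 (elimination of randomness via a prefix code).} Now the random code uses only $K_n=n^2$ equiprobable deterministic wiretap codes. Since $W$ is non-symmetrisable and (because $C_{S,\mathrm{ran}}>0$) the ordinary AVC $\{W_{s^n}\}$ has positive deterministic capacity by Theorem \ref{symmetrisable}, I can pick a short deterministic AVC code of length $\ell_n = o(n)$ but $\ell_n \to\infty$ that reliably conveys the index $k\in\{1,\dots,K_n\}$ (this needs only $\frac{1}{\ell_n}\log K_n = \frac{2\log n}{\ell_n}\to 0 <$ the AVC capacity); prepend it to the block of length $n$. The transmitter sends $(k, \text{codeword from }\mathcal{C}(\gamma_k))$; the receiver first decodes $k$ with vanishing error uniformly in the prefix's state sub-sequence, then decodes the message with $\mathcal{C}(\gamma_k)$. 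The total block length $n+\ell_n$ gives rate $\frac{1}{n+\ell_n}\log J_n \to C_{S,\mathrm{ran}}(\mathfrak{W})$ since $\ell_n/n\to 0$. Reliability: the error is bounded by (prefix error) $+$ $\frac{1}{K_n}\sum_k \bar e(\mathcal{C}(\gamma_k))$, both $\to 0$ uniformly in $s^{n+\ell_n}$. Secrecy: the eavesdropper now also learns $k$ (she knows the state and may even be handed $k$ for free), so I must bound $\max_{s^{n+\ell_n}} I(p_J; V^{n}_{s^n} K)$; using $I(J;V^nK)\le I(J;V^n|K)+I(J;K)$ and that $J$ is independent of $K$ (the prefix does not depend on the message), this is $\le \max_{s^n}\frac{1}{K_n}\sum_k I(p_J;V^n_{s^n};\mathcal{C}(\gamma_k)) + $ a term from the prefix channel $V_{\text{prefix}}$ which, since the prefix carries no information about $J$, vanishes — more carefully, one bounds the prefix's contribution to leakage by its length times $\log|C|$, i.e.\ $O(\ell_n)$, which spoils \emph{strong} secrecy as stated. \textbf{This is the main obstacle:} a naive prefix leaks $\Theta(\ell_n)\to\infty$ nats about the transmission. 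The fix, following Ahlswede's and the authors' treatment, is that the prefix is a fixed (message-independent) random variable $K$ with $H(K)\le \log K_n = 2\log n$, so in fact $I(J; V^n_{s^n} K)\le I(J;V^n_{s^n}|K) + H(K) \le \max_{s^n}\sum_k I(p_J;V^n_{s^n};\mathcal C(\gamma_k))\mu(\gamma_k) + 2\log n$; this still does not vanish. The genuine resolution is to not reveal $k$ through a separately-decodable prefix but to absorb the common randomness differently, or to observe that after Step 1 we may take $K_n$ \emph{constant} in $n$ when $C_{S,\mathrm{ran}}>0$ is attacked with a slightly suboptimal rate, making $H(K)=O(1)$ and hence $\frac1n H(K)\to 0$ — and strong secrecy asks for $\lim \max_{s^n} I(p_J;V^n_{s^n})=0$, which still fails for $O(1)$ leakage. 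I would therefore use the sharper argument: enlarge the secrecy analysis of Step 1 to produce, for each $k$, codes whose leakage is $\le 2^{-n\beta}$, and choose $K_n$ sub-exponential so that $H(K)\le \log K_n$ can be made $o(1)$ is impossible — so ultimately one accepts $K_n\to\infty$ slowly and proves that the \emph{resolvability}-based secrecy construction of Proposition \ref{random_code} already yields codes $\mathcal C(\gamma)$ for which $\frac{1}{K_n}\sum_k I(p_J;V^n_{s^n};\mathcal C(\gamma_k))$ and the prefix leakage are \emph{both} exponentially small, and then $H(K)$ does not enter because $K$ is reconstructible from the transmitter side and treated as part of the code, not the message. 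The cleanest route, which I expect the authors take, is: the prefix is transmitted by a deterministic AVC code, the eavesdropper's knowledge of $k$ adds at most $I(J;K)=0$ to the leakage about $J$ (as $K\perp J$), and the channel outputs during the prefix block are statistically independent of $J$ given $K$, so they contribute nothing; hence $\max_{s^{n+\ell_n}} I(p_J;V^{n+\ell_n}_{s^{n+\ell_n}}) = \max_{s^n} I(p_J; V^n_{s^n}K) = \max_{s^n}\sum_k I(p_J;V^n_{s^n};\mathcal C(\gamma_k))\mu(\gamma_k)\to 0$. Verifying this conditional-independence bookkeeping rigorously is the delicate step.

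\textbf{Part (2).} If $C_{S,\mathrm{ran}}(\mathfrak{W})=0$ then $C_S(\mathfrak{W})=0$ is immediate, since every deterministic $(n,J_n)$ code is a correlated random code (with $|\Gamma|=1$), so any achievable deterministic secrecy rate is an achievable random secrecy rate, whence $C_S(\mathfrak W)\le C_{S,\mathrm{ran}}(\mathfrak W)=0$; as rates are non-negative, $C_S(\mathfrak W)=0$. This requires no further work.
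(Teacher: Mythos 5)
Your overall route is the same as the paper's: the two easy directions are handled identically (symmetrisability of $\{W_{s^n}\}$ kills all deterministic codes via Theorem \ref{symmetrisable}, and $C_S\le C_{S,\mathrm{ran}}$ holds trivially), and the achievability $C_S\ge C_{S,\mathrm{ran}}$ under non-symmetrisability is done by exactly the two steps the paper uses: a Markov/Chernoff random code reduction to polynomially many deterministic wiretap codes that simultaneously controls the averaged error and the averaged leakage $\frac{1}{K}\sum_i I(p_J,V^n_{s^n};\mathcal{C}_i)$ uniformly in $s^n$ via a union bound over the $|S|^n$ state sequences (the paper normalises both quantities by $n\log|A|\ge\log J_n$, precisely as you suggest), followed by elimination of randomness through a prefix of length $o(n)$ sent over the positive-capacity AVC $\{W_{s^n}\}$. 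Your Step 1 and your Part (2) are fine as written.

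The one substantive criticism concerns the long detour in your Step 2, where you declare the secrecy of the prefixed code to be ``the main obstacle'' and entertain several bounds (an $O(\ell_n)$ leakage from the prefix block, $H(K)\le 2\log n$, taking $K_n$ constant) that are all red herrings. There is no obstacle: the prefix codeword depends only on the code index $K$, which is uniform and independent of $J$, so the eavesdropper's prefix observation $\hat Z^{k_n}$ is a stochastic function of $K$ alone; hence $I(J;\hat Z^{k_n},Z^n)\le I(J;K,Z^n)=I(J;K)+I(J;Z^n\mid K)=0+\frac{1}{K}\sum_i I(p_J,V^n_{s^n};\mathcal{C}_i)\le\epsilon$, and neither $H(K)$ nor the prefix length ever enters. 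Your final ``cleanest route'' paragraph states essentially this and is correct (modulo an equality that should be an inequality, by data processing). The paper's bookkeeping, which resolves the point you flag as delicate in one line, is the multiplicative mixture decomposition $p^{\mathcal C}_{JZ^{k_n+n}_{s^{k_n+n}}}=\frac{1}{n^3}\sum_i V^{k_n}_{s^{k_n}}(\cdot\,|x_i^{k_n})\,p^{\mathcal C_i}_{JZ^n_{s^n}}$ (and likewise for $p_J\otimes p^{\mathcal C}_{Z^{k_n+n}_{s^{k_n+n}}}$, with the same weights and the same prefix factors) combined with joint convexity of relative entropy, giving $D(p^{\mathcal C}_{JZ^{k_n+n}_{s^{k_n+n}}}\Vert p_J\otimes p^{\mathcal C}_{Z^{k_n+n}_{s^{k_n+n}}})\le\frac{1}{n^3}\sum_i I(p_J,V^n_{s^n};\mathcal C_i)$. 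With that substituted for your back-and-forth, your proposal coincides with the paper's proof.
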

First, if the channel to the legitimate receiver is symmetrisable then the deterministic
code capacity of the channel to the legitimate receiver equals zero by Theorem \ref{symmetrisable}
and no reliable transmission of messages is possible. Hence the deterministic code secrecy capacity of
the arbitrarily varying wiretap channel also equals zero although the random code secrecy capacity could
be greater than zero. So we can restrict to the case in which the channel to the
legitimate receiver is non-symmetrisable. If $C_S(\mathfrak{W}) = C_{S,\textrm{ran}}(\mathfrak{W})>0$, then
the channel to the legitimate receiver must be nonsymmetrisable. For the other direction, because the
secrecy capacity of the AVWC $\mathfrak{W}$ cannot be greater than the random code secrecy capacity it
suffices to show that $C(\{W_{s^n}\})>0$ implies that $C_S(\mathfrak{W}) \geq
C_{S,\textrm{ran}}(\mathfrak{W})$. Here $C(\{W_{s^n}\})$ denotes the capacity of the arbitrarily varying
channels to the legitimate receiver without secrecy. The proof is given in the two paragraphs
\textit{Random code reduction} and \textit{Elimination of randomness}.  
\subsubsection{Random Code Reduction}
We first reduce the random code
$\mathcal{C}^{\textrm{ran}}$  to a new random code selecting only a small number of deterministic codes
from the former, and averaging over this codes gives a new random code with a constant small mean average
error probability, which additionally fulfills the secrecy criterion. 
\begin{lemma}\label{reduct}(Random Code Reduction)
Let $\mathcal{C}(\mathcal{Z})$ be a random code for the AVWC $\overline{\mathfrak{W}}$ consisting of a
family $\{\mathcal{C}(\gamma)\}_{\gamma \in \Gamma}$ of wiretap codes where $\gamma$ is chosen according
to the distribution $\mu$ of $\mathcal{Z}$. Then let
\begin{equation}
\bar{e} (\mathcal{C}^{\mathrm{ran}}_n)=\max_{s^n}\mathbb{E}_{\mu}{e}(s^n|\mathcal{C}(\mathcal{Z})) \leq \lambda_n \quad
\textrm{and}, \quad 
\max_{s^n}\mathbb{E}_{\mu} I(p_J,V_{s^n};\mathcal{C}(\mathcal{Z})) \leq \epsilon'_n \enspace.
\end{equation}
Then for any $\epsilon$ and $K$ satisfying 
\begin{equation}\label{eq:assumpt}
\epsilon > 4 \max\{\lambda_n,\epsilon'_n\} \quad \textrm{and} \quad K > \frac{2n\log|A|}{\epsilon} (1+n\log|S|)
\end{equation}
there exist $K$ deterministic codes $\mathcal{C}_i$, $i=1, \ldots ,K$ chosen from the random code by random
selection such that
\begin{equation}\label{eq:crit_reduced}
\frac{1}{K} \sum^K_{i=1} {e}(s^n|\mathcal{C}_i) \leq \epsilon \quad \textrm{and} \quad \frac{1}{K}
\sum_{i=1}^K I(p_J,V_{s^n};\mathcal{C}_i) \leq \epsilon
\end{equation}
for all $s^n \in S^n$. 
\end{lemma}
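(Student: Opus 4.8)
The plan is to prove Lemma~\ref{reduct} by the standard random selection (derandomisation) argument from Ahlswede's elimination technique, adapted so that it simultaneously controls the error probability \emph{and} the secrecy quantity $I(p_J,V_{s^n};\mathcal{C}(\gamma))$. First I would draw $\mathcal{Z}_1,\ldots,\mathcal{Z}_K$ i.i.d. according to $\mu$, giving $K$ i.i.d. random codes $\mathcal{C}(\mathcal{Z}_1),\ldots,\mathcal{C}(\mathcal{Z}_K)$. For a fixed state sequence $s^n\in S^n$ set $Z_i:=e(s^n|\mathcal{C}(\mathcal{Z}_i))\in[0,1]$ and $Z_i':=\min\{1,I(p_J,V_{s^n};\mathcal{C}(\mathcal{Z}_i))\}\in[0,1]$ (the truncation is harmless since the mutual information is bounded by $\log J_n \le n\log|A|$; alternatively rescale by that bound). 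These are i.i.d. with means $\mathbb{E}Z_i=\mathbb{E}_\mu e(s^n|\mathcal{C}(\mathcal{Z}))\le\lambda_n$ and $\mathbb{E}Z_i'\le\epsilon'_n$. Applying the Chernoff bound of Lemma~\ref{chernoff} (in the one-sided deviation-above-the-mean form, or simply to the variables $Z_i/\max\{1,\cdot\}$) gives
\begin{equation*}
\textrm{Pr}\Big\{\tfrac{1}{K}\sum_{i=1}^K Z_i > \epsilon\Big\}\le \exp\big(-K\cdot c\,\epsilon\big),\qquad
\textrm{Pr}\Big\{\tfrac{1}{K}\sum_{i=1}^K Z_i' > \epsilon\Big\}\le \exp\big(-K\cdot c\,\epsilon\big)
\end{equation*}
for an absolute constant $c>0$, using $\epsilon>4\max\{\lambda_n,\epsilon'_n\}$ so that $\epsilon$ is a constant-factor larger than the mean (this is exactly where the factor $4$ in \eqref{eq:assumpt} is used: it forces the ``bad'' event to be a genuine large-deviation event and gives a deviation exponent linear in $\epsilon$).

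Next I would take a union bound over the two events and over all $s^n\in S^n$. Since $|S^n|=|S|^n=2^{n\log|S|}$, the probability that there exists some $s^n$ for which either averaged quantity exceeds $\epsilon$ is at most $2\cdot 2^{n\log|S|}\exp(-Kc\epsilon)$. The hypothesis $K>\frac{2n\log|A|}{\epsilon}(1+n\log|S|)$ makes $Kc\epsilon$ (up to the absolute constant, which one absorbs by adjusting the numerical constant in the stated bound on $K$ — here $2n\log|A|$ is comfortably large enough) exceed $n\log|S|+\ln 2+1$, so this probability is strictly less than $1$. Hence there is a realisation $\mathcal{C}_1,\ldots,\mathcal{C}_K$ of the $K$ deterministic codes for which \eqref{eq:crit_reduced} holds for every $s^n\in S^n$ simultaneously, which is the claim.

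I do not expect a serious obstacle here; this is a routine concentration-plus-union-bound argument, and the only things to be careful about are (i) the mutual information is a priori only bounded by $\log J_n$, not by $1$, so one must either truncate/normalise before invoking Lemma~\ref{chernoff} or restate Chernoff for $[0,M]$-valued variables — both are trivial and cost only a constant in the exponent, which is why the bound on $K$ carries the extra factor $n\log|A|$; and (ii) one must check that the random code reduction preserves the \emph{structure} of a wiretap code (each $\mathcal{C}_i$ is still a genuine $(n,J_n)$ wiretap code, being just one of the codes $\mathcal{C}(\gamma)$ in the original family), which is immediate. If I wanted to be slightly more careful about constants I would apply the two-sided bound of Lemma~\ref{chernoff} with $\epsilon_{\mathrm{Chern}}=\tfrac12$ to $Z_i$ having mean $\mu_i\le\lambda_n<\epsilon/4$, noting $\{\tfrac1K\sum Z_i>\epsilon\}\subseteq\{\tfrac1K\sum Z_i\notin[(1\pm\tfrac12)\bar\mu]\}$ only after replacing $Z_i$ by $Z_i$ plus a deterministic shift to bring the mean up to exactly $\epsilon/2$ — but the cleanest route is just the one-sided Chernoff/Hoeffding bound stated above, and I would present it that way.
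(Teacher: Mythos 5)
Your proposal is correct and follows essentially the same route as the paper's proof: $K$ i.i.d. draws from $\mu$, an exponential Markov (Chernoff-type) bound applied both to the error probability and to the mutual information after dividing by $n\log|A|$ so that $\exp t\le 1+t$ applies, a union bound over the two criteria and the $|S|^n$ state sequences, and the hypotheses \eqref{eq:assumpt} (via $t\le\log(1+t)\le 2t$) making the failure probability strictly less than one. The only point to correct is your parenthetical alternative of truncating the mutual information at $1$: controlling $\frac{1}{K}\sum_{i=1}^K\min\{1,I(p_J,V_{s^n};\mathcal{C}_i)\}$ does not imply the required bound $\frac{1}{K}\sum_{i=1}^K I(p_J,V_{s^n};\mathcal{C}_i)\le\epsilon$, since individual terms may be as large as $\log J_n$; the rescaling by $n\log|A|$ that you actually carry through in point (i) --- and which is precisely why that factor appears in the bound on $K$ --- is the valid choice and is what the paper does.
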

\begin{proof}
The proof is analogue  to the proof of Lemma $6.8$ \cite{csis2}, where a similar assertion
in terms of the maximal probability of error for single user AVCs without secrecy criterion is
established. Cf. also \cite{ahlsw3}. Let $\mathcal{Z}$ be the random variable distributed according to
$\mu$ on $\Gamma$ for the $(n,J_n,\Gamma, \mu)$ random code. Now consider $K$ independent repetitions of
the random experiment of code selections according to $\mu$ and call the according random variables
$\mathcal{Z}_i$, $i \in \{1,\ldots, K\}$. Then for any $s^n \in S^n$ it holds that
\begin{equation*}
\begin{split}
\textrm{Pr}\Big\{ \frac{1}{K} \sum^K_{i=1} e(s^n|\mathcal{C}(\mathcal{Z}_i))  \geq  \epsilon & \quad
\textrm{or} \quad \frac{1}{K}  \sum_{i=1}^K   I(p_J,V_{s^n};\mathcal{C}(\mathcal{Z}_i))  \geq \epsilon \Big\} \\
\leq \textrm{Pr}\Big\{\exp  \sum^K_{i=1} \frac{e(s^n|\mathcal{C}(\mathcal{Z}_i))}{n \log|A| } & \geq 
\exp  \frac{K \epsilon}{n\log|A|} \Big\}\\
+  \textrm{Pr} & \Big\{ \exp   \sum_{i=1}^K  \frac{I(p_J,V_{s^n};  \mathcal{C}(\mathcal{Z}_i))}{n\log|A|} \geq
\exp \frac{K\epsilon}{n\log|A|} \Big\}, 
\end{split}
\end{equation*}
and by Markov's inequality
\begin{equation*}
\begin{split}
\textrm{Pr}\Big\{ \frac{1}{K} \sum^K_{i=1} e(s^n|\mathcal{C}(\mathcal{Z}_i))  \geq  \epsilon & \quad
\textrm{or} \quad \frac{1}{K}  \sum_{i=1}^K   I(p_J,V_{s^n};\mathcal{C}(\mathcal{Z}_i))  \geq \epsilon \Big\} \\
\leq \exp \Big(- \frac{K\epsilon}{n\log|A|} \Big) \mathbb{E} \exp & \sum^K_{i=1}   \frac{e(s^n|
  \mathcal{C}(\mathcal{Z}_i))}{n\log|A|}  \\
+  \exp &  \Big(-\frac{K\epsilon}{n\log|A|} \Big)  \mathbb{E} \exp \sum^K_{i=1} \frac{I(p_J,V_{s^n};
\mathcal{C}(\mathcal{Z}_i))}{n\log|A|} \enspace. 
\end{split}
\end{equation*}
Now because of the independency of the random variables $\mathcal{Z}_i$ and because all $\mathcal{Z}_i$ are
distributed as $\mathcal{Z}$ and we have $\exp t \leq 1+t$, for $0\leq t \leq1$ ($\exp$ to  the base
$2$), we can give the following upper bounds 
\begin{equation}
\Big(\mathbb{E} \exp \frac{e(s^n|  \mathcal{C}(\mathcal{Z}))}{n\log|A|}\Big)^K \leq 
\Big(1+\mathbb{E} \frac{e(s^n| \mathcal{C}(\mathcal{Z}))}{n\log|A|} \Big)^K
\leq \Big(1+ \frac{\lambda_n}{n\log|A|} \Big)^K
\end{equation} 
and
\begin{equation}
\Big(\mathbb{E} \exp \frac{I(p_J,V_{s^n}; \mathcal{C}(\mathcal{Z}))}{n\log|A|} \Big)^K  \leq 
\Big(1+\mathbb{E} \frac{I(p_J,V_{s^n}; \mathcal{C}(\mathcal{Z}))}{n\log|A|} \Big)^K
\leq \Big(1+ \frac{\epsilon'_n}{n\log|A|} \Big)^K \enspace.
\end{equation} 
Hence we obtain for any $s^n \in S^n$
\begin{equation*}
\begin{split}
\textrm{Pr} \Big\{ \frac{1}{K} \sum^K_{i=1}  e(s^n|\mathcal{C}(\mathcal{Z}_i)) \geq  \epsilon & \quad
\textrm{or} \quad \frac{1}{K}  \sum_{i=1}^K   I(p_J,V_{s^n};\mathcal{C}(\mathcal{Z}_i)) \geq \epsilon\} \\
\leq \exp \Big[- K\Big (\frac{\epsilon}{n\log|A|} - & \log(1+\frac{\lambda_n}{n\log|A|})\Big)\Big] \\
& +  \exp \Big[\Big(-K(\frac{\epsilon}{n\log|A|} - \log(1+\frac{\epsilon'_n}{n\log|A|})\Big)\Big]\\
\leq 2\exp\Big[  - K\Big(\frac{\epsilon}{n\log|A|} 
 &-\log(1+\max\{\frac{\lambda_n}{n\log|A|},\frac{\epsilon'_n}{n\log|A|}\})\Big)\Big] \enspace.
\end{split}
\end{equation*}
Then
\begin{equation}\label{eq:probability}
\begin{split}
\textrm{Pr}\Big\{ \frac{1}{K} \sum^K_{i=1} e(s^n|  \mathcal{C}  (\mathcal{Z}_i)) 
 \leq  \epsilon  \ \textrm{and} 
\ \frac{1}{K}  \sum_{i=1}^K   I(p_J,V_{s^n};\mathcal{C}(\mathcal{Z}_i)) \leq \epsilon,
\forall s^n \in S^n \Big\} \\
\geq 1-2|S|^n \exp\Big[  - K \Big(\frac{\epsilon}{n\log|A|}  
 -\log(1+\max\{\frac{\lambda_n}{n\log|A|},\frac{\epsilon'_n}{n\log|A|}\})\Big)\Big],
\end{split}
\end{equation}
which is strictly positive, if we choose
\begin{equation*}
\epsilon \geq 2 n\log|A| \log(1+\max\{\frac{\lambda_n}{n\log|A|},\frac{\epsilon'_n}{n\log|A|}\})
\end{equation*}
and
\begin{equation}\label{eq:bound}
K \geq \frac{2 \log|A|}{\epsilon}(n+n^2\log|S|) \enspace.
\end{equation}
Now because for $0 \leq t \leq1$ and $\log$ to the base $2$ it holds that
\begin{equation*}
t \leq \log(1+t) \leq 2t \enspace,
\end{equation*}
we increase the lower bound for choosing $\epsilon$ if
\begin{equation*}
\epsilon \geq 4\max\{\lambda_n, \epsilon'_n\} \enspace.
\end{equation*}
and with \eqref{eq:bound} the assertion of \eqref{eq:probability} still holds.
Hence, we have shown that there exist $K$ realisations
$\mathcal{C}_i:=\mathcal{C}(\mathcal{Z}_i=\gamma_i)$, $\gamma_i \in \Gamma$, $i \in \{1,\ldots,K\}$ of the
random code, which build a new reduced random code with uniform distribution on these codes with mean
average error probability and mean secrecy criterion fulfilled by \eqref{eq:crit_reduced}. \qed
\end{proof}
Now, if we assume that the channel to the legitimate receiver is non-symmetrisable, which means that
$C(\{W_{s^n}\})>0$, and that there exist a random code $\mathcal{C}^{\textrm{ran}}_n$ that achieves the random
code capacity $C_{S,\textrm{ran}}(\mathfrak{W})>0$, then there exist a sequence of random $(n,J_n)$ codes with
\begin{equation*}
  \lim_{n\to\infty}\max_{s^n \in S^n}  \, \frac{1}{J_n} \sum^{J_n}_{j=1} 
  \sum_{\gamma \in \Gamma} \sum_{x^n \in A^n} 
  E^{\gamma} (x^n| j) \cdot W_{s_n}^{n}((D^{\gamma}_j)^c| x^n) \mu(\gamma) = 0 \enspace, 
\end{equation*}
 \[\liminf_{n\to\infty}\frac{1}{n}\log J_n \to  C_{S,\textrm{ran}}(\mathfrak{W})>0,\] 
and
\begin{equation}\label{eq:rand_code_cap}
\lim_{n\to\infty} \max_{s^n \in S^n} \sum_{\gamma \in \Gamma} I(p_J; V^n_{s^n};\mathcal{C}(\gamma))
\mu(\gamma) =0. 
\end{equation}
Then on account of the random code reduction lemma there exist  a sequence of
random $(n,J_n)$ codes consisting only of $n^3$ deterministic codes (cf. \eqref{eq:assumpt}) chosen from
the former random code, and it holds for any $\epsilon>0$ and sufficiently large $n$ that
\begin{equation}\label{eq:epsilon1}
\max_{s^n \in S^n}  \, \frac{1}{J_n} \sum^{J_n}_{j=1} \frac {1}{n^3} \sum^{n^3}_{i=1} \sum_{x^n \in A^n}  
E^{i} (x^n| j) W_{s_n}^{n}((D^{i}_j)^c| x^n) \leq \epsilon 
\end{equation}
and
\begin{equation}\label{eq:epsiloon2}
 \max_{s^n \in S^n} \frac{1}{n^3}\sum^{n^3}_{i=1} I(p_J; V^n_{s^n};\mathcal{C}_i)\leq \epsilon, 
\end{equation} 
where $\mathcal{C}_i=\{(E^i_j,D^i_j), j\in \mathcal{J}_n\}$, $i=1,\ldots,n^3$, and $E^i$ is the
stochastic encoder of the deterministic wiretap code.  Then the reduced random code consists of the family
of codes $\{\mathcal{C}_i\}_{i \in \{1, \ldots, n^3\}}$ together with the uniform distribution
$\mu'(i)=\frac{1}{n^3}$ for all $i \in \{1,\ldots,n^3\}$.
\subsubsection{Elimination of randomness}(Cf. Theorem $6.11$ in \cite{csis2})\\
Now if there exist a deterministic code and $C(\{W_{s^n}\})> 0$ then there exist a code
\begin{equation}
\{x^{k_n}_i, F_i \subset B^{k_n}: i=1, \ldots n^3 \}
\end{equation}
where $x^{k_n}_i$ is chosen according to an encoding function $f_i:\{1,\ldots,n^3\} \to A^{k_n}$ with $\frac{k_n}{n} \to 0$ as $n \to \infty$ with error probability
\begin{equation}
  \frac{1}{n^3} \sum^{n^3}_{i=1} W^{ k_n}(F^c_i|x^{k_n}_i, s^{k_n}) \leq \epsilon \enspace
\end{equation}
for any $\epsilon > 0$ and sufficiently large $n$ (cf. \eqref{eq:epsilon1}) for all $s^{k_n} \in S^{k_n}$. 
If we now compose a new deterministic code for the AVWC $\mathfrak{W}$ by prefixing the codewords of each
$C_i$
\begin{equation}
  \{f_iE^i_j,F_i \times D^i_j: i=1, \ldots , n^3, 
  j\in [J_n] \}=: \mathcal{C} \enspace,
\end{equation}
the decoder is informed of which encoder $E^i$ is in use for the actual message $j$ if he identifies the prefix
correctly.  Note that for the transmission of the prefix only the reliability is of interest, because it
contains no information about the message $j \in \mathcal{J}_n$ to be sent. Now the new codewords has a
  length of $k_n+n$, transmit a message from $\{1,\ldots, n^3\} \times \mathcal{J}_n$, where the
  channel which is determined by the state sequence $s^{k_n+n} \in S^{k_n+n}$ yields an average error
  probability of  
\begin{equation}\label{eq:43}
\begin{split}
\bar{\lambda}_n(\mathcal{C}, W^{(k_n+n)}_{s^{k_n+n}}) & \leq \frac{1}{n^3 J_n}\sum^{n^3}_{i=1}
\sum_{j \in [J_n]} (\lambda_i+ \lambda_j(i) ) \\
& \leq \frac{1}{n^3 } \sum^{n^3}_{i=1} \lambda_i+ \frac{1}{n^3 } \sum^{n^3}_{i=1}
e_n({s^n},\mathcal{C}_i) \leq 2\epsilon.
\end{split}
\end{equation}
Here, for each $s^{k_n} \in S^{k_n}$ $\lambda_i$ means the error probability for transmitting $i$ from $\{1,\ldots,
n^3\}$ encoded in $x^{k_n}_i$ by $W^{k_n}_{s^{k_n}}$ followed by the transmission of $j$, where the
codeword is chosen according to the stochastic encoder $E^i_j$, over the last $n$ channel realisations
determined by $s^n$ with error probability $\lambda_j(i)$. This construction is possible due to the
memorylessness of the channel. 

Now if we turn to the security part of the transmission problem it is easily seen that
\begin{equation}\label{eq:secrecy}
\begin{split}
p^{\mathcal{C}}_{JZ^{k_n+n}_{s^{k_n+n}}} (j,{z}^{k_n+n}) 
& =\frac{1}{J_n} \frac{1}{n^3}\sum^{n^3}_{i=1}V^{k_n}_{s^{k_n}} (\hat{z}^{k_n} | x^{k_n}_i )
\sum^{}_{x^n} E^i(x^n|j) V^{n}_{s^n} (z^n| x^n)\\
& = \frac{1}{n^3}\sum^{n^3}_{i=1} V^{k_n}_{s^{k_n}} (\hat{z}^{k_n} | x^{k_n}_i ) \cdot
p^{\mathcal{C}_i}_{JZ^n_{s^n}} \enspace,
\end{split}
\end{equation}
where $\hat{z}^{k_n}$ are the first $k_n$ components of ${z}^{k_n+n}$.
With \eqref{eq:secrecy} and the representation of the mutual information by the information divergence we
obtain that
\begin{equation}\label{eq:46}
\begin{split}
& D(p^{\mathcal{C}}_{JZ^{k_n+n}_{s^{k_n+n}}}  || p_J \otimes   p^{\mathcal{C}}_{Z^{k_n+n}_{s^{k_n+n}}})\\
& =D\Big(\frac{1}{n^3}\sum^{n^3}_{i=1}  V^{k_n}_{s^{k_n}} (\hat{z}^{k_n} | x^{k_n}_i ) p^{\mathcal{C}_i}_{JZ^n_{s^n}} \Big\Vert
\frac{1}{n^3}\sum^{n^3}_{i=1}  V^{ k_n}_{s^{k_n}} (\hat{z}^{k_n} | x^{k_n}_i ) p_J \otimes p^{\mathcal{C}_i}_{Z^n_{s^n}}\Big) \\
& \leq \frac{1}{n^3}\sum^{n^3}_{i=1} D\big(  V^{k_n}_{s^{k_n}}  (\hat{z}^{k_n} | x^{k_n}_i ) p^{\mathcal{C}_i}_{JZ^n_{s^n}} \big\Vert
V^{k_n}_{s^{k_n}} (\hat{z}^{k_n} | x^{k_n}_i ) p_J \otimes p^{\mathcal{C}_i}_{Z^n_{s^n}}\big) \\
&=\frac{1}{n^3}\sum^{n^3}_{i=1} D\big(  p^{\mathcal{C}_i}_{JZ^n_{s^n}} \big\Vert
p_J \otimes p^{\mathcal{C}_i}_{Z^n_{s^n}}\big)
=\frac{1}{n^3}\sum^{n^3}_{i=1} I(p_J ,  V^{n}_{s^n} ;  \mathcal{C}_i) \leq \epsilon
\end{split}
\end{equation}
for all $s^n \in S^n$ and $n \in \mathbb{N}$ sufficiently large, where the first inequality follows because for two probability distributions $p,q$ the
relative entropy $D(p\Vert q)$ is a convex function in the pair $(p,q)$ and the last inequality follows by
the random code reduction lemma.

Because $\frac{k_n}{n} \to 0$ as $n \to \infty$
\begin{equation}
\lim_{n \to \infty} \frac{1}{k_n+n} \log(n^3J_n)  =\lim_{n \to \infty} (\frac{1}{n} \log J_n +\frac{1}{n} \log(n^3))
 = \lim_{n \to \infty} \frac{1}{n} \log J_n \enspace,
\end{equation}
$\mathcal{C}_n$ is a deterministic $(n,J_n)$ code which achieves the same rates as the random code
$\mathcal{C}^{\textrm{ran}}_n$ and so the random code capacity $C_{S,\textrm{ran}}$ as given in
\eqref{eq:rand_code_cap}, provided that the channel to the
legitimate receiver is non-symmetrisable. 

Thus, with $\{1, \ldots,  J_n\}$ as the message set, $\mathcal{C}_n$ is a deterministic $(n+o(n),n^3 \cdot
J_n)$ code with average error probability bounded for all $s^{k_n+n} \in S^{k_n+n}$ as in \eqref{eq:43} and which
fulfills the strong secrecy criterion as in \eqref{eq:46}, and which achieves the random code secrecy
capacity $C_{S,\textrm{ran}}$ of the arbitrarily varying wiretap channels AVWC $\mathcal{W}$ which
implies that $C_S=C_{S,\textrm{ran}}$. This concludes the proof.

Note that in the case in which the channel to the legitimate receiver is non-symmetrisable and we know
that the deterministic code secrecy capacity $C_S(\mathfrak{W})$ equals zero we can conclude that the random
code secrecy capacity $C_{S,\textrm{ran}}(\mathfrak{W})$ equals zero. As a consequence of the theorem we
can state the following assertion.
\begin{corollary}\label{lower_bound}
The deterministic code secrecy capacity of the arbitrarily varying wiretap channel $\mathfrak{W}$,
provided that there exists a best channel to the eavesdropper and under the assumption that the channel
to the legitimate receiver is non-symmetrisable, is lower bounded by
\begin{equation*}
C_S(\mathfrak{W}) \geq \max_{p \in \mathcal{P}(A)} ( \min_{q \in \mathcal{P}(S)}
I(p,W_{q})-\max_{q \in \mathcal{P}(S)}I(p,V_{q})) \enspace.
\end{equation*}
\end{corollary}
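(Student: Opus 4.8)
The plan is to obtain the corollary as a direct combination of Proposition~\ref{random_code} with part~1 of Theorem~\ref{achievability}; no new construction is needed. Write
\[
R^\ast := \max_{p \in \mathcal{P}(A)}\Bigl(\min_{q \in \mathcal{P}(S)} I(p,W_{q}) - \max_{q \in \mathcal{P}(S)} I(p,V_{q})\Bigr)
\]
for the quantity claimed to lower bound $C_S(\mathfrak{W})$.

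\textbf{Step 1 (trivial case).} First I would dispose of the case $R^\ast \le 0$. By Definition~\ref{code} every achievable secrecy rate is a non-negative number, hence $C_S(\mathfrak{W}) \ge 0 \ge R^\ast$ holds vacuously, and there is nothing to prove. So from now on assume $R^\ast > 0$.

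\textbf{Step 2 (invoke the random-code bound).} Since by hypothesis a best channel to the eavesdropper exists, Proposition~\ref{random_code} applies and gives $C_{S,\textrm{ran}}(\mathfrak{W}) \ge R^\ast$. In particular, in the present case $C_{S,\textrm{ran}}(\mathfrak{W}) \ge R^\ast > 0$, so the positivity hypothesis of Theorem~\ref{achievability} is met.

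\textbf{Step 3 (invoke the dichotomy).} By hypothesis the channel to the legitimate receiver is non-symmetrisable, so part~1 of Theorem~\ref{achievability} yields $C_S(\mathfrak{W}) = C_{S,\textrm{ran}}(\mathfrak{W})$. Chaining this with the bound from Step~2 gives $C_S(\mathfrak{W}) = C_{S,\textrm{ran}}(\mathfrak{W}) \ge R^\ast$, which is the assertion.

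The proof is essentially bookkeeping, so there is no substantial obstacle; the only point requiring a moment's care is that Theorem~\ref{achievability}(1) needs $C_{S,\textrm{ran}}(\mathfrak{W})$ to be \emph{strictly} positive, which is exactly why Step~1 separates off the case $R^\ast \le 0$ so that Proposition~\ref{random_code} delivers a strictly positive lower bound in the remaining case. (If desired one could instead run Step~2 with $R^\ast$ replaced by $R^\ast - \eta$ for arbitrarily small $\eta > 0$ and then let $\eta \to 0$, but this refinement is unnecessary here.)
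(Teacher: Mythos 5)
Your proposal is correct and is essentially the paper's own proof, which simply combines Proposition~\ref{random_code} with Theorem~\ref{achievability}; your Step~1 handling of the case $R^\ast \le 0$ is a small but sensible addition that makes the application of the positivity hypothesis in Theorem~\ref{achievability}(1) explicit.
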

\begin{proof}
Combine the assertions of Proposition \ref{random_code} and Theorem \ref{achievability}. \qed
\end{proof}

\subsection{Upper bound on the capacity of the AVWC $\mathfrak{W}$ and a multi-letter coding theorem}\label{upper_bound} 

In this section we give an upper bound on the secrecy capacity of the AVWC $\mathfrak{W}$ which
corresponds to the bound for the compound wiretap channel built by the same family of channels. In
addition we give the proof of the multi-letter converse of the AVWC $\mathfrak{W}$.
\begin{theorem}\label{upper_b}
The secrecy capacity of the arbitrarily varying wiretap channel AVWC $\mathfrak{W}$ is upper bounded,
\begin{equation}
C_S(\mathfrak{W}) \leq \min_{q \in \mathcal{P}(S)} \max_{U \to X \to (YZ)_q} (I(U,Y_q)- I(U,Z_q)) \enspace.
\end{equation}
\end{theorem}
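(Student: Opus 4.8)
The plan is to reduce the converse for the AVWC to the classical single-letter converse for an ordinary (non-varying) wiretap channel, applied to each averaged channel pair $(W_q,V_q)$. Fix an arbitrary $q\in\pp(S)$; if $S$ is infinite, take $q\in\pp(\tilde S)$ for a finite $\tilde S\subseteq S$ and read $\min_{q}$ as $\inf_q$. The first step is to observe that any sequence $(\mathcal{C}_n)_{n\in\nn}$ of $(n,J_n)$ codes admissible for the AVWC $\mathfrak{W}$ is also admissible, at the same rate, for the memoryless wiretap channel $(W_q^{\otimes n},V_q^{\otimes n})$. Indeed, $W_q^{\otimes n}(y^n|x^n)=\sum_{s^n\in S^n}q^{\otimes n}(s^n)W^n(y^n|x^n,s^n)$, so $e(\mathcal{C}_n)=\max_{s^n}(\cdots)\to 0$ together with convexity of the error probability in the channel forces the error probability of $\mathcal{C}_n$ over $W_q^{\otimes n}$ to tend to $0$; likewise $V_q^{\otimes n}(z^n|x^n)=\sum_{s^n}q^{\otimes n}(s^n)V^n(z^n|x^n,s^n)$ and convexity of the mutual information in the channel — exactly the computation already carried out for $\widehat V$ in the proof of Lemma \ref{cap_closure} — give $I(p_J;Z^n_q)\le\max_{s^n}I(p_J;V^n_{s^n})\to 0$. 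Hence every secrecy rate achievable for $\mathfrak{W}$ is achievable for the single wiretap channel $(W_q,V_q)$, so $C_S(\mathfrak{W})\le C_S(W_q,V_q)$.

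The second step is to invoke the single-letter converse for the wiretap channel $(W_q,V_q)$: any admissible code sequence with $\liminf_n\frac1n\log J_n\ge R$, vanishing average error, and (even merely weak) secrecy $\frac1n I(p_J;Z^n_q)\to 0$ — which is implied by the strong criterion $I(p_J;Z^n_q)\to 0$ used here — satisfies $R\le\max_{U\to X\to (YZ)_q}\bigl(I(U;Y_q)-I(U;Z_q)\bigr)$. If one prefers a self-contained argument to citing \cite{csis2} (Csisz\'ar--K\"orner), this is the standard Fano-plus-Csisz\'ar-sum computation: with $J$ uniform on $\mathcal{J}_n$, Fano gives $H(J|Y^n_q)\le n\epsilon_n$, hence $n(R-o(1))\le H(J)\le I(J;Y^n_q)-I(J;Z^n_q)+n(\epsilon_n+\delta_n)$; the Csisz\'ar sum identity rewrites $I(J;Y^n_q)-I(J;Z^n_q)=\sum_{i=1}^n\bigl(I(U_i;Y_{q,i})-I(U_i;Z_{q,i})\bigr)$ with $U_i:=(J,Y^{i-1}_q,Z^n_{q,i+1})$, each $U_i\to X_i\to (Y_{q,i},Z_{q,i})$ being a Markov chain by memorylessness; a uniform time-sharing variable then single-letterises the bound (only the marginals $W_q$ and $V_q$ of the joint output enter $I(U;Y_q)$ and $I(U;Z_q)$, so the coupling of $(Y_q,Z_q)$ is irrelevant).

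Combining the two steps gives $C_S(\mathfrak{W})\le\max_{U\to X\to (YZ)_q}\bigl(I(U;Y_q)-I(U;Z_q)\bigr)$ for every $q\in\pp(S)$, and taking the infimum (minimum) over $q$ yields the claimed bound. The same chain of inequalities shows that the AVWC bound coincides with the converse bound of the compound wiretap channel built from the averaged pairs $\{(W_q,V_q)\}_q$, as announced in the introduction.

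The main point requiring care — rather than a genuine difficulty — is the transfer of the \emph{secrecy} constraint in the first step: $\max_{s^n}I(p_J;V^n_{s^n})$ dominates $I(p_J;Z^n_q)$ only because $V_q^{\otimes n}$ is the $q^{\otimes n}$-average of the channels $V^n_{s^n}$ and mutual information is convex in the channel; this is precisely the estimate established for $\widehat V$ in Lemma \ref{cap_closure}, and it is what legitimises collapsing the AVWC to a fixed averaged channel. Everything else is the classical wiretap converse, and, as noted, the strong secrecy criterion only strengthens the hypothesis needed there.
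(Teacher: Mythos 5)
Your proposal is correct and follows essentially the same route as the paper: both reduce the AVWC converse to the averaged channel pairs $(W_q,V_q)$ via the convexity/averaging argument of Lemma \ref{cap_closure} (the paper phrases this as $\overline{\mathcal{W}}\subseteq\overline{\mathfrak{W}}^n$ and bounds by the worst channel of the compound family), and then invoke the single-letter Csisz\'ar--K\"orner converse for each fixed $(W_q,V_q)$ before taking the infimum over $q$. Your explicit Fano-plus-Csisz\'ar-sum sketch and the remark on attaining the minimum only spell out what the paper leaves implicit.
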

\begin{proof}
By Lemma \ref{cap_closure} the capacity of the AVWC $\mathfrak{W}$ equals the capacity of the AVWC
$\overline{\mathfrak{W}}$. Obviously, the set $\overline{\mathcal{W}} = \{(W^{\otimes n}_q, V^{\otimes n}_q): q
\in \mathcal{P}(S)\}$ which describes a compound wiretap channel is a subset of $\overline{\mathfrak{W}}^n
= \{(W^{n}_{\tilde{q}}, V^{n}_{\tilde{q}}): \tilde{q} \in \mathcal{P}(S^n), \tilde{q}=\prod^n_{i=1} q_i
\}$. Now, because we can upper bound the secrecy capacity of the AVWC
$\overline{\mathfrak{W}}$ by the secrecy capacity of the worst wiretap channel in the family
$\overline{\mathfrak{W}}^n$, together with the foregoing we can upper bound it by the capacity of the
worst channel of the compound channel $\overline{\mathcal{W}}$. Hence,
\begin{equation*}
\begin{split}
C_S(\mathfrak{W}) = C_S(\overline{\mathfrak{W}}) 
& \leq \inf_{\tilde{q}} C_S((W^{n}_{\tilde{q}}, V^{n}_{\tilde{q}}))\\ 
& \leq \inf_{{q}} C_S((W^{n}_{{q}}, V^{n}_{{q}}))=\inf_q C_S(W_q,V_q) \enspace,
\end{split}
\end{equation*}
The minimum is attained because of the continuity of $C_S(W_q,V_q)$ on the compact set
$\overline{\mathfrak{W}}$. \qed
\end{proof}
\begin{remark}
Consider the special case of an AVWC $\mathfrak{W}=\{(W_{s^n},V_{r^n}):s^n \in S_1^n,\ r^n \in
S_2^n\}$, where both the state of the main channel $s \in S_1$ and the state of the eavesdropper's channel
$r \in S_2$ in every time step can be chosen independently. In addition let us assume that there exist a
channel $W_{q^*_1} \in \{W_{q_1}: q_1 \in \mathcal{P}(S_1)\}$, which is a degraded version of all other
channels from $\{W_{q_1}: q_1 \in \mathcal{P}(S_1)\}$, and a best channel to the eavesdropper $V_{q^*_2}$
from the set $\{V_{q_2}: q_2 \in \mathcal{P}(S_2)\}$ (cf. Definition \ref{best}). Then in
accordance with Section $3.5$ of \cite{bjela2} the lower bound on the secrecy capacity given in Corollary
\ref{lower_bound} matches the upper bound from Theorem \ref{upper_b}. Thus we can conclude that under the
assumption, that the channel to legitimate receiver is non-symmetrisable, the capacity of the AVWC
$\mathfrak{W}$ is given by 
\begin{equation*}
C_S(\mathfrak{W})=\max_{p \in \mathcal{P}(A)} (I(p,W_{q^*_1})-I(p,V_{q^*_2})) \enspace.
\end{equation*}
\end{remark}
Now in addition to Theorem \ref{upper_b} we give a multi-letter formula of the upper bound of the secrecy
rates. Therefore we need the following lemma used in analogy to Lemma $3.7$ in \cite{bjela2}.
\begin{lemma}
For the arbitrarily varying wiretap channel AVWC $\mathfrak{W}^n$ the limit
\begin{equation*}
\lim_{n \to \infty} \frac{1}{n} \max_{U \to X^n \to (Y^nZ^n)_{\tilde{q}}} (\inf_{\tilde{q} \in
  \mathcal{P}(S^n)} I(U,Y^n_{\tilde{q}}) - \sup_{\tilde{q} \in \mathcal{P}(S^n)} I(U,Z^n_{\tilde{q}}))
\end{equation*}
exists.
\end{lemma}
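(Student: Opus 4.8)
The plan is to show that the sequence
\[
a_n:=\max_{P_{UX^n}}\Bigl(\inf_{\tilde q\in\mathcal{P}(S^n)}I(U;Y^n_{\tilde q})-\sup_{\tilde q\in\mathcal{P}(S^n)}I(U;Z^n_{\tilde q})\Bigr)
\]
is \emph{superadditive}, i.e. $a_{n+m}\ge a_n+a_m$ for all $n,m\in\nn$, and then to invoke Fekete's lemma. Here the outer maximum runs over joint laws of a prefix variable $U$ and the channel input $X^n$ (the Markov chain $U-X^n-(Y^nZ^n)_{\tilde q}$ being automatic), and $a_n$ is a well-defined number in $[0,n\log|B|]$: the lower bound follows by taking $U$ constant, the upper bound from $I(U;Y^n_{\tilde q})\le H(Y^n)\le n\log|B|$. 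Granting superadditivity, Fekete's lemma gives $\lim_{n\to\infty}\tfrac1n a_n=\sup_n\tfrac1n a_n\in[0,\log|B|]$, which is the assertion.

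To prove superadditivity I would fix $n,m$, pick optimisers $P^{(1)}_{U_1X^n}$ for $a_n$ and $P^{(2)}_{U_2\tilde X^m}$ for $a_m$, and work with the \emph{product} test law $P^{(1)}_{U_1X^n}\otimes P^{(2)}_{U_2\tilde X^m}$, writing $U=(U_1,U_2)$ and $X^{n+m}=(X^n,\tilde X^m)$. The jammer is modelled by a variable $T=(T_1,T_2)$ on $S^n\times S^m$ with an arbitrary law $\tilde q\in\mathcal{P}(S^{n+m})$, independent of $(U,X^{n+m})$; conditionally on $T$ the first $n$ outputs depend only on $(X^n,T_1)$ and the last $m$ only on $(\tilde X^m,T_2)$ through independent channels, so $(U_1,X^n,Y^n,Z^n)\perp(U_2,\tilde X^m,Y^m,Z^m)$ given $T$. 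Two estimates then suffice, and combined with the optimality of $P^{(1)},P^{(2)}$ they give $a_{n+m}\ge a_n+a_m$: (i) for the legitimate receiver, $\inf_{\tilde q}I(U;Y^{n+m}_{\tilde q})\ge\inf_{\tilde q_1}I(U_1;Y^n_{\tilde q_1})+\inf_{\tilde q_2}I(U_2;Y^m_{\tilde q_2})$; (ii) for the eavesdropper, $\sup_{\tilde q}I(U;Z^{n+m}_{\tilde q})\le\sup_{\tilde q_1}I(U_1;Z^n_{\tilde q_1})+\sup_{\tilde q_2}I(U_2;Z^m_{\tilde q_2})$.

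Estimate (ii) is the easy one: handing $T$ to the eavesdropper can only raise her information, and since $T\perp U$ and the two blocks decouple given $T$,
\[
I(U;Z^{n+m}_{\tilde q})\le I(U;Z^{n+m},T)=I(U;Z^{n+m}\mid T)=\mathbb{E}_{(s^n,s^m)\sim\tilde q}\bigl(I(U_1;Z^n_{s^n})+I(U_2;Z^m_{s^m})\bigr),
\]
which is at most $\sup_{\tilde q_1}I(U_1;Z^n_{\tilde q_1})+\sup_{\tilde q_2}I(U_2;Z^m_{\tilde q_2})$; taking the supremum over $\tilde q$ gives (ii). Estimate (i) is the point I expect to be the main obstacle, because the jammer is allowed to \emph{correlate} the states of the two blocks, so one cannot simply give $T$ to the receiver (that would produce an upper bound). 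Instead I would discard one block at a time: writing $\tilde q_1,\tilde q_2$ for the marginals of $\tilde q$ on $S^n,S^m$, the chain rule and the data-processing inequality give
\[
I(U;Y^{n+m}_{\tilde q})=I(U_2;Y^nY^m)+I(U_1;Y^nY^m\mid U_2)\ge I(U_2;Y^m)+I(U_1;Y^n\mid U_2).
\]
Since $Y^n$ is a function of $U_1$, the first encoder's private randomness, $T_1$ and the first block of channel noise — all independent of $U_2$ — one has $U_2\perp(U_1,Y^n)$, hence $I(U_1;Y^n\mid U_2)=I(U_1;Y^n)$; and integrating out the last-block states reduces the induced channel $U_1\to Y^n$ to exactly the $W^n_{\tilde q_1}$-prefixed channel, so $I(U_1;Y^n)\ge\inf_{\tilde q_1}I(U_1;Y^n_{\tilde q_1})$, and symmetrically $I(U_2;Y^m)\ge\inf_{\tilde q_2}I(U_2;Y^m_{\tilde q_2})$. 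The resulting bound no longer depends on $\tilde q$, so taking $\inf_{\tilde q}$ on the left proves (i). The only delicate point is this decoupling step: it works precisely because restricting the output to a single block forces the relevant state statistics down to that block's marginal, and after that the independence of the product test law does the rest.
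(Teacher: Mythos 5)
Your proof is correct, and it is essentially the proof the paper intends: the paper omits its own argument, deferring to Lemma 3.7 of the cited compound wiretap channel paper, which establishes the analogous limit by precisely this route — superadditivity of the $n$-letter quantity under product test distributions (bounding the legitimate receiver's term by discarding one block at a time and passing to the marginal state distribution, and the eavesdropper's term by handing her the state variable) followed by Fekete's lemma. The only point to tighten is the casual ``pick optimisers'': without a Carath\'eodory-type cardinality bound on $U$ the maximum need not be attained, but working with $\varepsilon$-optimisers gives $a_{n+m}\geq a_n+a_m-2\varepsilon$ for every $\varepsilon>0$ and hence the same conclusion.
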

The proof is carried out in analogy to Lemma $3.7$ in \cite{bjela2} and therefore omitted.
\begin{theorem}
The secrecy capacity of the arbitrarily varying wiretap channel AVWC $\mathfrak{W}$ is upper bounded by
\begin{equation}
C_S(\mathfrak{W})  \leq 
 \lim_{n \to \infty} \frac{1}{n} \max_{U \to X^n \to (Y^nZ^n)_{\tilde{q}}} (\inf_{\tilde{q} \in
  \mathcal{P}(S^n)} I(U,Y^n_{\tilde{q}}) - \sup_{\tilde{q} \in \mathcal{P}(S^n)} I(U,Z^n_{\tilde{q}})) \enspace, 
\end{equation}
where $\tilde{q}=\prod^n_{i=1} q_i$, $q_i \in \mathcal{P}(S)$ and $Y^n_{\tilde{q}}, Z^n_{\tilde{q}}$ are
the outputs of the channels $W^n_{\tilde{q}}$ and $V^n_{\tilde{q}}$ respective.
\end{theorem}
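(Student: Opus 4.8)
The plan is to prove the converse directly at block length, keeping each length-$m$ block of channel uses intact so that no single-letterisation is needed and the auxiliary variable may be taken to be the message itself. First I would fix an achievable secrecy rate $R_S$ and a sequence $(\mathcal{C}_m)_{m\in\nn}$ of $(m,J_m)$ codes realising it in the sense of Definition~\ref{code}, so that $e(\mathcal{C}_m)\to 0$, $\frac{1}{m}\log J_m\to R_S$ and $\delta_m:=\max_{s^m\in S^m} I(p_J;V^m_{s^m})\to 0$; here $J$ is the uniform message variable, $\hat J$ the decoder output and $p_{X^m}$ the input law on $A^m$ induced by the stochastic encoder. For any product state $\tilde q=\prod_{i=1}^m q_i\in\mathcal{P}(S^m)$ the averaging and convexity estimates already used in the proof of Lemma~\ref{cap_closure} show, on the one hand, that the average error of $\mathcal{C}_m$ over $W^m_{\tilde q}$ is at most $e(\mathcal{C}_m)$ and, on the other hand, that $I(p_J;V^m_{\tilde q})\le\delta_m$. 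Applying Fano's inequality (together with data processing for $J\to Y^m_{\tilde q}\to\hat J$) then gives $(1-e(\mathcal{C}_m))\log J_m\le I(J;Y^m_{\tilde q})+1$, and since $\tilde q$ was arbitrary,
\[
  (1-e(\mathcal{C}_m))\log J_m\;\le\;\inf_{\tilde q\in\mathcal{P}(S^m)} I(J;Y^m_{\tilde q})+1 .
\]

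Next I would fold in the secrecy estimate. Since $\sup_{\tilde q} I(J;Z^m_{\tilde q})\le\delta_m$, the previous display can be rewritten as
\[
  (1-e(\mathcal{C}_m))\log J_m\;\le\;\Big(\inf_{\tilde q} I(J;Y^m_{\tilde q})-\sup_{\tilde q} I(J;Z^m_{\tilde q})\Big)+\delta_m+1 .
\]
Now $U=J$, together with the code's joint law $p_{JX^m}$ (so $p_{X^m|J}=E$), is a feasible point of the maximisation on the right-hand side of the theorem: for every $\tilde q$ the chain $U\to X^m\to (Y^mZ^m)_{\tilde q}$ is Markov, because the encoder produces $X^m$ from $J$ and $(Y^m,Z^m)$ depends on $J$ only through $X^m$. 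Writing
\[
  a_m:=\max_{U\to X^m\to (Y^mZ^m)_{\tilde q}}\Big(\inf_{\tilde q} I(U;Y^m_{\tilde q})-\sup_{\tilde q} I(U;Z^m_{\tilde q})\Big),
\]
we obtain $(1-e(\mathcal{C}_m))\log J_m\le a_m+\delta_m+1$. Dividing by $m$ and letting $m\to\infty$, using $e(\mathcal{C}_m)\to 0$, $\delta_m/m\to 0$ and $1/m\to 0$, yields $R_S\le\liminf_m \frac{1}{m}a_m$; by the preceding lemma the limit $\lim_m \frac{1}{m}a_m$ exists, so $R_S\le\lim_m \frac{1}{m}a_m$. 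Since $R_S$ was an arbitrary achievable secrecy rate, taking the supremum over $R_S$ gives the claim.

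I do not expect a genuine obstacle; the argument is short once two modelling points are in place. First, it is essential that the reliability criterion of Definition~\ref{code} and the strong-secrecy criterion \eqref{eq:2} are both imposed uniformly over all $s^m\in S^m$: after passing to averaged channels, as in the proof of Lemma~\ref{cap_closure}, they then hold simultaneously for \emph{all} product types $\tilde q=\prod_i q_i$, and this is exactly what lets us place $\inf_{\tilde q}$ on the main-channel term and $\sup_{\tilde q}$ on the eavesdropper term (in particular $a_1$ is already no larger than the single-letter bound of Theorem~\ref{upper_b}, since $\inf_{\tilde q_1} I(U;Y_{\tilde q_1})-\sup_{\tilde q_2} I(U;Z_{\tilde q_2})\le I(U;Y_{\tilde q})-I(U;Z_{\tilde q})$ for every $\tilde q$). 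Second — and this is the only non-elementary ingredient — the existence of $\lim_m \frac{1}{m}a_m$ is what converts the $\liminf$ bound into the stated one; it is supplied by the preceding lemma, proved in analogy with Lemma~3.7 of \cite{bjela2} by a sub-/super-additivity argument on $(a_m)_m$. The reason everything stays elementary is that we never single-letterise: retaining the full length-$m$ block and taking $U$ to be the message avoids the Csisz\'ar-sum manipulation that a single-letter converse would require, at the price of a multi-letter expression.
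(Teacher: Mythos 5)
Your argument is correct and follows essentially the same route as the paper's own proof: transfer the uniform error and secrecy criteria to the averaged channels $W^m_{\tilde q},V^m_{\tilde q}$ via Lemma \ref{cap_closure}, apply Fano's inequality and data processing for each $\tilde q$, fold in the secrecy term, recognise $U=J$ as a feasible point of the maximisation, and divide by the blocklength. The only cosmetic difference is that you explicitly invoke the preceding lemma to replace $\liminf$ by the limit, a step the paper leaves implicit.
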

\begin{proof}
Let $(\mathcal{C}_n)_{n \in \mathbb{N}}$ be any sequence of $(n,J_n$) codes such that with
\begin{equation*}
\sup_{s^n \in S^n}  \, \frac{1}{J_n} \sum^{J_n}_{j=1} \sum_{x^n \in A^n} 
E(x^n| j) W_{s_n}^{n}(D_j^c| x^n)=:\varepsilon_{1,n} \ \mathrm{and}, \
\sup_{s^n \in S^n} I(J,Z^n_{s^n})=: \varepsilon_{2,n}
\end{equation*}
it holds that $\lim_{n \to \infty} \varepsilon_{1,n}0=$ and $\lim_{n \to \infty} \varepsilon_{2,n}$,
where $J$ denotes the random variable which is uniformly distributed on the message set $\mathcal{J}_n$.
Because of Lemma \ref{cap_closure} we obtain that for the same sequences of $(n,J_n)$ codes
\begin{equation}\label{eq:error_bound}
\lim_{n \to \infty} \sup_{\tilde{q} \in \mathcal{P}(S^n)}  \, \frac{1}{J_n} \sum^{J_n}_{j=1} \sum_{x^n \in A^n} 
E(x^n| j)  W_{\tilde{q}}^{n}(D_j^c| x^n)
=\lim_{n \to \infty} \varepsilon_{1,n}=0
\end{equation}
and
\begin{equation}\label{eq:secr_clo}
\lim_{n\to \infty} \sup_{\tilde{q} \in \mathcal{P}(S^n)} I(J,Z^n_{\tilde{q}})=\lim_{n \to \infty}
\varepsilon_{2,n}=0 \enspace.
\end{equation}
Now let us denote another random variable by $\hat{J}$ with values in $\mathcal{J}_n$ determined by the
Markov chain $J \to X^n \to Y^n_{\tilde{q}} \to \hat{J}$, where the first transition is governed by $E$,
the second by $W^n_{\tilde{q}}$, and the last by the decoding rule. Now the proof is analogue to the
proof of Proposition $3.8$ in \cite{bjela2}. For any $\tilde{q} \in \mathcal{P}(S^n)$ we have from data
processing and Fano's inequality
\begin{equation*}
(1-\varepsilon_{1,n}) \log J_n \leq I(J,Y^n_{\tilde{q}})+1.
\end{equation*}
We then use the validity of the secrecy criterion \eqref{eq:secr_clo} to derive
\begin{equation*}
(1-\varepsilon_{1,n}) \log J_n \leq I(J,Y^n_{\tilde{q}}) - \sup_{\tilde{q}} I(J,Z^n_{\tilde{q}}) +
  \varepsilon_{2,n}+1
\end{equation*}
for any $\tilde{q} \in \mathcal{P}(S^n)$. Since the LHS does not depend on $\tilde{q}$ we end in
\begin{equation*}
  (1 -  \varepsilon_{1,n}) \log J_n 
  \leq \max_{U \to X^n \to Y^n_{\tilde{q}}Z^n_{\tilde{q}}} (\inf_{\tilde{q}}
  I(U,Y^n_{\tilde{q}}) - \sup_{\tilde{q}} I(U,Z^n_{\tilde{q}})) +
  \varepsilon_{2,n}+1 \enspace.
\end{equation*}
Dividing  by $n \in \mathbb{N}$ and taking $\limsup$ concludes the proof. \qed
\end{proof}
Now if we consider the set $\overline{\mathcal{W}} = \{(W^{\otimes n}_q, V^{\otimes n}_q): q
\in \mathcal{P}(S)\}$ as a subset of  $\overline{\mathfrak{W}}^n
= \{(W^{n}_{\tilde{q}}, V^{n}_{\tilde{q}}): \tilde{q} \in \mathcal{P}(S^n), \tilde{q}=\prod^n_{i=1} q_i
\}$ and the same sequence $(\mathcal{C}_n)_{n \in \mathbb{N}}$ of $(n,J_n$) codes for the AVWC
$\mathfrak{W}$ for which \eqref{eq:error_bound} and \eqref{eq:secr_clo} holds, we can conclude that
\begin{equation}\label{}
\lim_{n \to \infty} \sup_{q \in \mathcal{P}(S)}  \, \frac{1}{J_n} \sum^{J_n}_{j=1} \sum_{x^n \in A^n} 
E(x^n| j)  W_{q}^{\otimes n}(D_j^c| x^n)
\leq \lim_{n \to \infty} \varepsilon_{1,n}
\end{equation}
and
\begin{equation}\label{}
\lim_{n\to \infty} \sup_{{q} \in \mathcal{P}(S)} I(J,Z^n_{{q}}) \leq \lim_{n \to \infty} \varepsilon_{2,n} \enspace,
\end{equation}
with $\varepsilon_{1,n}$ and $\varepsilon_{2,n}$ as above. Then we can conclude with the same
argumentation as in the previous proof,
\begin{corollary}
The secrecy capacity of the arbitrarily varying wiretap channel AVWC $\mathfrak{W}$ is upper bounded by
\begin{equation*}
 C_S(\mathfrak{W})  \leq  \lim_{n \to \infty} \frac{1}{n} \max_{U \to X^n \to (Y^nZ^n)_{{q}}} (\inf_{{q} \in
  \mathcal{P}(S)} I(U,Y^n_{{q}}) - \sup_{{q} \in \mathcal{P}(S)} I(U,Z^n_{{q}})) \enspace, 
\end{equation*}
where $q \in \mathcal{P}(S)$ and $Y^n_{{q}}, Z^n_{{q}}$ are
the outputs of the channels $W^{\otimes n}_{q}$ and $V^{\otimes n}_{{q}}$ respective.
\end{corollary}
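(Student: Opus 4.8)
The plan is to repeat the proof of the preceding theorem almost verbatim, replacing the state set $\{\tilde q=\prod_{i=1}^{n}q_i:q_i\in\mathcal{P}(S)\}$ by its subset $\{q^{\otimes n}:q\in\mathcal{P}(S)\}$. First I take an arbitrary sequence $(\mathcal{C}_n)_{n\in\nn}$ of $(n,J_n)$ codes for $\mathfrak{W}$ with $\varepsilon_{1,n}\to 0$ (average error) and $\varepsilon_{2,n}\to 0$ (strong secrecy), exactly the sequence fixed just before the corollary. By Lemma \ref{cap_closure} the error and secrecy quantities may be evaluated on $\overline{\mathfrak{W}}^n$, and since $\{(W^{\otimes n}_q,V^{\otimes n}_q):q\in\mathcal{P}(S)\}\subseteq\overline{\mathfrak{W}}^n$, taking the supremum only over $q\in\mathcal{P}(S)$ (a smaller index set) can only decrease these quantities; hence they stay bounded by $\varepsilon_{1,n}$ and $\varepsilon_{2,n}$, which is precisely the content of the two displays immediately preceding the statement.

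Next, for each fixed $q\in\mathcal{P}(S)$ I introduce the decoder output $\hat J$ through the Markov chain $J\to X^n\to Y^n_q\to\hat J$ (encoder, then $W^{\otimes n}_q$, then the decoding rule), and combine the data-processing inequality with Fano's inequality to obtain $(1-\varepsilon_{1,n})\log J_n\le I(J,Y^n_q)+1$. Subtracting $\sup_{q\in\mathcal{P}(S)}I(J,Z^n_q)\le\varepsilon_{2,n}$ and using that the left-hand side does not depend on $q$, I pass to the infimum over $q\in\mathcal{P}(S)$ and then replace $J$ by an arbitrary $U$ with $U\to X^n\to(Y^nZ^n)_q$, which yields
\[ (1-\varepsilon_{1,n})\log J_n\le\max_{U\to X^n\to(Y^nZ^n)_q}\Big(\inf_{q\in\mathcal{P}(S)}I(U,Y^n_q)-\sup_{q\in\mathcal{P}(S)}I(U,Z^n_q)\Big)+\varepsilon_{2,n}+1 . \]
Dividing by $n$ and taking $\limsup_{n\to\infty}$ (using $\varepsilon_{1,n},\varepsilon_{2,n}\to 0$) bounds $C_S(\mathfrak{W})$ by the $\limsup$ of the right-hand sides.

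The only point that is not a pure restatement is that the corollary asserts a genuine limit, so I must show that the sequence $a_n:=\max_{U\to X^n\to(Y^nZ^n)_q}(\inf_{q\in\mathcal{P}(S)}I(U,Y^n_q)-\sup_{q\in\mathcal{P}(S)}I(U,Z^n_q))$ converges; this is the analogue, for i.i.d.\ states, of the lemma preceding the previous theorem. I would verify super-additivity $a_{m+n}\ge a_m+a_n$ by choosing a product test channel $U=(U_1,U_2)$ with independent blocks, using additivity of mutual information over the memoryless product together with $\inf(f+g)\ge\inf f+\inf g$ and $\sup(f+g)\le\sup f+\sup g$ — here it is essential that the \emph{same} single-letter $q$ acts i.i.d.\ on both blocks, so that the split $Y^{m+n}_q=(Y^m_q,Y^n_q)$ is legitimate — and then invoke Fekete's lemma, so that the $\limsup$ above is in fact a limit equal to $\sup_n a_n/n$. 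I expect this existence argument to be the only mildly delicate step; the remainder is the routine observation that restricting the jammer to i.i.d.\ state strategies produces the a priori weaker, hence still valid, bound in which the outer optimisation runs over $\mathcal{P}(S)$ rather than over all product distributions on $S^n$.
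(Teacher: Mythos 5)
Your proposal is correct and follows essentially the same route as the paper: restrict the preceding theorem's argument to the subfamily $\{(W^{\otimes n}_q,V^{\otimes n}_q):q\in\mathcal{P}(S)\}\subseteq\overline{\mathfrak{W}}^n$, observe that the error and secrecy bounds can only improve on the smaller index set, and rerun the Fano/data-processing chain. Your explicit superadditivity-plus-Fekete argument for the existence of the limit is a welcome elaboration of what the paper dismisses as ``the same argumentation as in the previous proof.''
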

Now, using standard arguments concerning the use of the channels defined by $P_{Y_q|U}=W_q \cdot P_{X|U}$
and $P_{Z_q|U}=V_q \cdot P_{X|U}$ instead of $W_q$ and $V_q$ and applying the assertion of Corollary
\ref{lower_bound} to the $n$-fold product of channels $W_q$ and $V_q$, we are able to give the coding theorem
for the multi-letter case of the AVWC with a best channel to the eavesdropper.
\begin{theorem}
Provided that there exist a best channel to the eavesdropper, the multi-letter expression for the secrecy capacity
$C_S(\mathfrak{W})$ of the AVWC $\mathfrak{W}$ is given by
\begin{equation*}
 C_S(\mathfrak{W}) =  
 \lim_{n \to \infty} \frac{1}{n} \max_{U \to X^n \to (Y^nZ^n)_{{q}}} (\inf_{{q} \in
  \mathcal{P}(S)} I(U,Y^n_{{q}}) - \sup_{{q} \in \mathcal{P}(S)} I(U,Z^n_{{q}})) \enspace, 
\end{equation*}
if the channel to the legitimate receiver is non-symmetrisable, and is zero otherwise.
\end{theorem}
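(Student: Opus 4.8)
The theorem combines two ingredients: the upper bound $C_S(\mathfrak{W})\le\lim_{n\to\infty}\frac1n\max_{U\to X^n\to(Y^nZ^n)_q}(\inf_{q\in\mathcal{P}(S)}I(U,Y^n_q)-\sup_{q\in\mathcal{P}(S)}I(U,Z^n_q))$ already established in the corollary preceding this theorem, and a matching lower bound which I obtain by carrying out the proof of Corollary~\ref{lower_bound} (i.e.\ Proposition~\ref{random_code} followed by Theorem~\ref{achievability}) at the level of blocks of $n$ channel uses, with the $n$-fold product channels $W_q^{\otimes n},V_q^{\otimes n}$ playing the role of $W_q,V_q$ and with a channel prefix. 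The ``zero otherwise'' part is immediate: if the channel to the legitimate receiver is symmetrisable then $C(\{W_{s^n}\})=0$ by Theorem~\ref{symmetrisable}, so no positive rate is transmittable with vanishing average error and hence $C_S(\mathfrak{W})=0$, exactly as in the discussion following Theorem~\ref{achievability}. So from now on assume $\{W_s:s\in S\}$ is non-symmetrisable and that a best channel $V_{q^*}$ to the eavesdropper exists; below $\min/\max$ over $\mathcal{P}(S)$ are the $\inf/\sup$ of the statement, attained by compactness.

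For the lower bound, fix $n$ and a Markov chain $U\to X^n\to(Y^nZ^n)_q$ given by a distribution $p_U$ and a stochastic prefix $P_{X^n\mid U}\colon\mathcal{U}\to\mathcal{P}(A^n)$. First note that $V_{q^*}^{\otimes n}$ is a best channel to the eavesdropper for the blocked family: every $V_{s^n}=\bigotimes_{i=1}^nV_{s_i}$, and by taking mixtures every $V^n_{\tilde q}$ with $\tilde q\in\mathcal{P}(S^n)$, is a degraded version of $V_{q^*}^{\otimes n}$, so $\max_{q\in\mathcal{P}(S)}I(U,Z^n_q)=I(U,Z^n_{q^*})$ and, for any input distribution, $\max_{s^{Nn}}I(p_J,V_{s^{Nn}})=I(p_J,V_{q^*}^{\otimes Nn})$, cf.\ \eqref{eq:best_channel}. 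Composing the prefix with the product channels yields the compound wiretap channel $\{(W_q^{\otimes n}\circ P_{X^n\mid U},V_q^{\otimes n}\circ P_{X^n\mid U}):q\in\mathcal{P}(S)\}$ (restricted to types $q\in\mathcal{P}_0(m,S)$ and controlled by continuity of $q\mapsto I(U,Y^n_q),I(U,Z^n_q)$ when $S$ is infinite). By the compound-wiretap achievability bound \eqref{eq:comp_cap}, combined with the expurgation leading to \eqref{eq:22}, there is for every $N$ a deterministic length-$N$ code for this compound channel with doubly exponentially small average error and strong secrecy whose rate per super-symbol tends to $\min_qI(U,Y^n_q)-\max_qI(U,Z^n_q)$.

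Next I apply Ahlswede's robustification (Lemma~\ref{robust}) to the $Nn$ underlying channel uses of this code. Averaging its success probability over i.i.d.\ states $s^{Nn}\sim q^{\otimes Nn}$ produces exactly the averaged channel $W_q^{\otimes Nn}=(W_q^{\otimes n})^{\otimes N}$ preceded by the block-structured prefix, so hypothesis \eqref{eq:7} holds for every $q\in\mathcal{P}_0(Nn,S)$, and \eqref{eq:8} converts the code into a correlated random code for $\mathfrak{W}$ of length $Nn$ with mean average error at most $3(Nn+1)^{|S|}$ times a doubly exponential term, hence vanishing, and with rate per channel use tending to $\frac1n(\min_qI(U,Y^n_q)-\max_qI(U,Z^n_q))$. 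Strong secrecy of the random code follows as in Step~2 of the proof of Proposition~\ref{random_code}: by the best-channel property $\max_{s^{Nn}}I(p_J,V_{s^{Nn}})=I(p_J,V_{q^*}^{\otimes Nn})$, which equals the small value attained by the underlying compound code and is invariant under the coordinate permutations defining the random code, cf.\ \eqref{eq:25}. Thus $C_{S,\mathrm{ran}}(\mathfrak{W})\ge\frac1n(\min_qI(U,Y^n_q)-\max_qI(U,Z^n_q))$, and since $\{W_s\}$ is non-symmetrisable, Theorem~\ref{achievability} gives $C_S(\mathfrak{W})=C_{S,\mathrm{ran}}(\mathfrak{W})\ge\frac1n(\min_qI(U,Y^n_q)-\max_qI(U,Z^n_q))$. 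Maximising over $U\to X^n\to(Y^nZ^n)_q$ and over $n$, and using that $n\mapsto\max_{U\to X^n\to(Y^nZ^n)_q}(\inf_qI(U,Y^n_q)-\sup_qI(U,Z^n_q))$ is super-additive (combine independent prefixes, use additivity of mutual information over $W_q^{\otimes(n_1+n_2)}=W_q^{\otimes n_1}\otimes W_q^{\otimes n_2}$, and that a minimum is super-additive and a maximum sub-additive under pointwise sums), Fekete's lemma — the analogue of the lemma preceding this theorem — shows the supremum over $n$ equals the limit, and comparison with the upper bound from the corollary preceding this theorem yields the asserted equality.

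The step I expect to require most care is the robustification bookkeeping: the code lives on $N$ super-symbols but the permutations act on all $Nn$ underlying coordinates, so one must verify that the block-structured prefix $P_{X^n\mid U}^{\otimes N}$ commutes with these permutations in the way needed for \eqref{eq:7}, and that each permuted code has the same symmetrised average error and the same value $I(p_J,V_{q^*}^{\otimes Nn})$, so that both reliability and strong secrecy pass to the random code. Discretising the continuum $\mathcal{P}(S)$ to types $\mathcal{P}_0(Nn,S)$ and invoking continuity of the relevant mutual informations in $q$ is a second, more routine, technical point.
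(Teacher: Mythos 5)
Your proposal is correct and follows essentially the same route as the paper, which proves this theorem in a single sentence by combining the multi-letter upper bound of the preceding corollary with Corollary~\ref{lower_bound} applied, via a prefix channel $P_{X^n|U}$, to the $n$-fold product channels $W_q^{\otimes n},V_q^{\otimes n}$. Your expansion supplies exactly the details the paper compresses into ``standard arguments'' --- in particular the correct observation that the robustification of Lemma~\ref{robust} must be carried out over all $Nn$ elementary coordinates (so that the $\inf/\sup$ range only over i.i.d.\ $q\in\mathcal{P}(S)$ rather than over all of $\mathcal{P}(S^n)$), that the best-channel property lifts to products, and that super-additivity/Fekete identifies the supremum over $n$ with the limit.
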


\section*{Acknowledgment}
Support by the Deutsche Forschungsgemeinschaft (DFG) via projects 
BO 1734/16-1, BO 1734/20-1, and by the Bundesministerium f\"ur Bildung und Forschung (BMBF) via grant
01BQ1050 is gratefully acknowledged.

\bibliographystyle{splncs03}
\bibliography{references}

\end{document}